\newcommand{\beq}{\begin{equation}}
\newcommand{\eeq}{\end{equation}}
\newcommand{\beqn}{\begin{eqnarray}}
\newcommand{\eeqn}{\end{eqnarray}}
\newtheorem{theorem}{Theorem}[section]
\newtheorem{lemma}[theorem]{Lemma}
\newenvironment{proof}[1][Proof]{\begin{trivlist}
\item[\hskip \labelsep {\bfseries #1}]}{\end{trivlist}}
\newenvironment{remark}[1][Remark]{\begin{trivlist}
\item[\hskip \labelsep {\bfseries #1}]}{\end{trivlist}}
\newcommand{\qed}{\nobreak \ifvmode \relax \else
      \ifdim\lastskip<1.5em \hskip-\lastskip
      \hskip1.5em plus0em minus0.5em \fi \nobreak
      \vrule height0.75em width0.5em depth0.25em\fi}
\begin{document}

\title{Virtualization of 5G Cellular Networks as a Hierarchical Combinatorial Auction}
\author{Kun Zhu and Ekram Hossain\thanks{K. Zhu is with College of Computer Science and Technology at Nanjing University of Aeronautics and Astronautics. He was with the Department of Electrical and Computer Engineering at the University of Manitoba. E. Hossain is with the Department of Electrical and Computer Engineering at the University of Manitoba, Canada (emails: kunzhu@hotmail.com, Ekram.Hossain@umanitoba.ca). The work was supported by a Discovery Grant from the Natural Sciences and Engineering Research Council of Canada (NSERC).}}
\maketitle

\begin{abstract}

Virtualization has been seen as one of the main evolution trends in
the forthcoming fifth generation (5G) cellular networks which enables the decoupling of
infrastructure from the services it provides. In this case, the
roles of infrastructure providers (InPs) and mobile virtual network
operators (MVNOs) can be logically separated and the resources
(e.g., subchannels, power, and antennas) of a base station owned by
an InP can be transparently shared by multiple MVNOs, while each
MVNO virtually owns the entire BS. Naturally, the issue of resource
allocation arises. In particular, the InP is required to abstract the
physical resources into isolated slices for each MVNO who then
allocates the resources within the slice to its subscribed users. In
this paper, we aim to address this two-level hierarchical resource
allocation problem while satisfying the requirements of efficient
resource allocation, strict inter-slice isolation, and the ability
of intra-slice customization. To this end, we design a hierarchical
combinatorial auction mechanism, based on which a truthful and
sub-efficient resource allocation framework is provided.
Specifically, winner determination problems (WDPs) are formulated
for the InP and MVNOs, and computationally tractable algorithms are
proposed to solve these WDPs. Also, pricing schemes are designed
to ensure incentive compatibility. The designed mechanism can
achieve social efficiency in each level even if each party involved
acts selfishly. Numerical results show the effectiveness of the
proposed scheme.

\end{abstract}

\begin{IEEEkeywords}
5G cellular, massive MIMO, wireless network virtualization, resource
allocation, mechanism design, hierarchical combinatorial auction,
incentive compatibility, winner determination problem (WDP).
\end{IEEEkeywords}

\section{Introduction}

The next generation cellular wireless networks (i.e., 5G networks)
are expected to be deployed around 2020 which are envisioned to
provide higher data rate, lower end-to-end latency, improved
spectrum/energy efficiency, and reduced cost per bit. In general,
addressing these requirements will require significantly larger
amount of spectrum, more aggressive frequency reuse, extreme
densification of small cells, and the wide use of several enabling
technologies (e.g., full-duplex, massive MIMO, C-RAN, and wireless
virtualization)~\cite{Hossain2015}. In this paper, we will focus on
the issue of wireless virtualization which has been receiving
increasing attentions from both academia and
industry~\cite{Liang2014,Costa2013,NEC2013}.

The main idea of wireless virtualization is to enable resource
sharing and to decouple the infrastructure from the services it
provides. Accordingly, the role of infrastructure provider (InP)
needs to be logically separated from the role of service
providers\footnote{We assume that the InP does not provide services
directly to users.}. And the InP can provide the infrastructure as a
service (IasS) to mobile virtual network operators (MVNOs) who may
not have their own infrastructure and/or wireless network resources.
Specifically, the physical resources (e.g., infrastructure,
spectrum, power, backhaul/fronthaul, and antennas) of a base station
(BS) owned by an InP are abstracted into isolated virtual resources
(i.e., slices) which are then transparently shared among different
MVNOs. Each MVNO virtually owns the entire BS with the resources
provided in the allocated slice.

Several benefits can be achieved through such decoupling and
sharing (i.e., virtualization). First, the resource utilization can be improved through
moderating the dynamic requirements of users from different MVNOs
(i.e., statistical multiplexing gains). Second, the capital
expenses (CapEx) and operation expenses (OpEx) can be reduced
through sharing. Third, lower entry barrier for small service
providers could enrich the services provided to users.

A significant challenge for wireless virtualization is resource
allocation which addresses the problem of how to slice the physical
resources for virtual networks of MVNOs to accommodate the dynamic
demands of their subscribed users, while satisfying the requirements
of efficient resource allocation, inter-slice isolation, and the
ability of intra-slice customization. The problem is more
challenging if the agents involved are self-interested. That is, how
to design a mechanism that can achieve a desirable social
efficiency\footnote{By desirable social efficiency we mean that the
social welfare, that results from the behaviors of all agents, can achieve a
certain required value.} even when each agent acts selfishly.

In general, there are two types of implementation schemes for
resource allocation in wireless virtualization. In the first type,
the InP plays the central role who directly allocates the physical
resources to users of different MVNOs according to certain
requirements (e.g., pre-determined resource sharing ratios). In the
second type, the MVNOs are also involved which makes the resource
allocation problem a hierarchical (i.e., two-level) problem. In this case,
the InP is only responsible for allocating the resources to each
MVNO, while each MVNO manages the resource allocation for its users.

Most of the existing work on resource allocation for wireless
virtualization can be categorized into the first type. Specifically,
in~\cite{Malanchini2014,Belt2014,Kamel2014,Lu2014},
optimization-based dynamic resource allocation schemes were
proposed. In~\cite{Fu2013}, a stochastic game based scheme was
proposed. The schemes in these work can achieve high resource
utilization. However, since the MVNOs are not involved in the
resource allocation, the capability of intra-slice customization for
each MVNO cannot be easily achieved. Besides, the computation
complexity for InP is high considering that the optimal resource
allocation has to be obtained directly for all users. A few work
considered the problem of resource allocation to MVNOs. For example,
in~\cite{Yang2013}, an opportunistic sharing based resource
allocation scheme was proposed. In~\cite{Liu2013}, a bankruptcy game
was proposed for dynamic wireless resource allocation among multiple
operators. However, in these work the users are not involved.
Besides, most of the existing work on wireless virtualization do not
consider techniques such as massive MIMO~\cite{Larsson2014}, which
will be a key enabler for 5G networks. Also, the social efficiency
was not considered when agents play selfishly.

To jointly address the two-level resource allocation problem, we
design a hierarchical auction mechanism consisting of two
hierarchical auction models (i.e., a single-seller multiple-buyer
model as shown in Fig.~\ref{fig:BasicAuctionModel} and an extended
multiple-seller multiple-buyer model as shown in
Fig.~\ref{fig:ExtendedAuctionModel}), allocation procedures, and the
corresponding pricing schemes. Note that auction approaches have
been widely used in the literature for resource allocation problem
in wireless systems~\cite{Zhang2013}. This is due to the efficiency
in both process and outcome. With the advent of the concepts of
wireless network virtualization and spectrum secondary market in
cognitive radio networks, the middlemen (e.g., MVNOs) play more
important roles in bridging the supply from resource owners with the
demands from end users. However, {\em almost all existing work on
application of auction mechanisms for resource allocation only
consider single-level auctions without involving the middlemen}. To
the best of our knowledge, the work in~\cite{Tang2012} is the first
work on this topic in which some qualitative analyses were provided
but without application. Besides, there is a social planner
controlling the entire resource allocation of all players. In this
case, it cannot satisfy the requirement of intra-slice customization
for wireless virtualization. The work in~\cite{Lin2013} proposed a
three-stage auction mechanism for spectrum trading. However, the
middleman can submit bids for multiple items but can acquire at most
one. The work in~\cite{Lin2014} investigated the hierarchical
resource allocation through an auction in the upper level and a
price-demand method in the lower level. Besides, all these work only
consider the single seller case. The main differences of the
proposed scheme with existing work are summarized in Table I in 
\textbf{Appendix A}. Note that compared with hierarchical game
based approach (e.g., Stackelberg game in ~\cite{Duan2010}) in which
the social welfare under equilibrium strategies may not be the
optimal ones in sub-games, the proposed hierarchical auction
mechanism achieves social efficiency at each level\footnote{Note
that there is a gap between the allocation efficiency achieved by a
general sharing scheme (i.e., the InP directly allocates resources
to all users) and that achieved by the hierarchical auction
mechanism. Therefore, we use the term `sub-efficient' for the entire
hierarchical auction. This can also be seen as the cost due to the
introduction of middlemen.}.

Based on the proposed hierarchical auction mechanism, a truthful and
sub-efficient semi-distributed resource allocation framework for
wireless virtualization is provided. Specifically, the proposed
hierarchical auction models consist of two levels of combinatorial
auctions. In the lower-level auction, the users act as the bidder
and each MVNO acts as a seller, while in the upper-level auction,
the MVNOs act as the bidders and the InP acts as the seller. The
role of an MVNO can be regarded as that of a middleman. Note that
the two-level auctions are dependent considering the fact that the
MVNOs do not have intrinsic demands and valuations, which however
depend on the demands and resale gains from users.

To determine the optimal amount of resources allocated to each
bidder, winner determination problems (WDPs) are formulated, and the
corresponding solution algorithms are proposed. Also, pricing
schemes are designed to ensure incentive compatibility, which is
critical for achieving social efficiency. The proposed scheme can
satisfy the three requirements mentioned above. First, efficient
resource allocation can be achieved in each level by allocating
resources to bidders with higher valuations. Second, we can achieve
flexible but strict inter-slice isolation in the sense that once the
resources are allocated to an MVNO, these resources can only be
allocated to the users of that MVNO (i.e., strict), while the amount
of resources allocated to each MVNO can be dynamically changed
(i.e., flexible). Third, intra-slice customization can be achieved
due to the involvement of MVNOs which can individually decide how
the resources within the slice can be allocated. Also, compared with
totally centralized allocation schemes (e.g., schemes of the first
type), the computational complexity can be reduced considering two
facts. First, the computation is distributed among InP and MVNOs,
each of which only needs to calculate the allocation for its own
bidders. Second, the dimension of the winner determination problem
faced by each MVNO is relatively small considering that the
available resources are only a subset of the entire BS resources.

The novelty and main contributions of this paper can be summarized
as follows:
\begin{itemize}
    \item A hierarchical combinatorial auction mechanism is designed to jointly address the hierarchical resource allocation for wireless virtualization in massive MIMO networks.
    \item The proposed scheme can satisfy the three requirements of wireless
    virtualzaiton. Also, it jointly considers the feasibility, admission control, and allocation
    problems in a unified resource allocation framework.
    \item Several desirable properties of the hierarchical auction mechanism (i.e., incentive compatibility, individual rationality, and allocation efficiency in each level) can be achieved with appropriate design of
    allocation and pricing schemes.
    \item The computations are migrated to different parties which
    lowers the complexity for each party. Also, computationally tractable algorithms are proposed for solving the WDPs.
    \item While most of the existing auction-based resource allocation schemes only consider
    one dimensional resource (e.g., in most cases the spectrum), we
    consider the allocation with more degrees of freedom (i.e., frequency, power, and
    spatial). Also, most existing work only consider single-minded bidder,
    while we consider both single-minded and
    general valuation bidders\footnote{The definitions of single-minded and general valuations will be given in Section III.}.
\end{itemize}

The rest of the paper is organized as follows. Section II describes
the system model, assumptions, and presents the proposed
hierarchical combinatorial auction models. In Section III, the
resource allocation for wireless virtualization is investigated. WDP
formulations are given and the corresponding solution algorithms and
pricing schemes are presented. Also, theoretical analysis of the
auction properties is provided. In Section IV, the auction model is
extended to consider multi-seller multi-buyer case. Numerical
results and analysis are presented in Section V. Section VI
concludes the paper.

\section{System Model, Assumptions, and Proposed Hierarchical Combinatorial Auction Models}

\subsection{Channel Model and Assumptions}

For the system model, we consider the downlink transmission of an
OFDMA-based cellular system with an InP providing infrastructure
services (including base stations and wireless resources) to a set
of $\mathcal{M}=\{1,2,\ldots,M\}$ MVNOs. Each MVNO $m$ then provides
services to $K_m$ subscribed users in the considered cell. The InP
owns a set of $\mathcal{C}=\{1,2,\ldots,C\}$ subchannels each with
bandwidth $W$. Universal frequency reuse is considered for all
cells. The base station (BS) in each cell is equipped with $A$
antennas and each user equipment has a single antenna (i.e.,
multi-user MIMO). We assume $A$ to be large (e.g., several
hundreds) to achieve massive MIMO effect which scales up traditional
MIMO by orders of magnitude.

The multi-cell system is operated in time-division duplexing (TDD)
mode and we assume that all BSs and UEs are perfectly synchronized.
For achieving all the benefits of massive MIMO, the base station
requires channel state information for precoding. To this end,
channel reciprocity is exploited. That is the downlink channel is
obtained by the Hermitian transpose of the uplink channel, which can
be estimated by the BS from the uplink pilots transmitted by each
user. Note that by operating in TDD mode, the massive MIMO system is
scalable in the sense that the time required for pilots is
proportional to the number of users served per cell and is
independent of the number of antennas~\cite{Larsson2014}. Also, the
channel responses are assumed to be invariant during the symbol
time.

Massive MIMO uses spatial-division multiplexing. In this case, the
BS can serve multiple users in the same time-frequency resource
block. We assume that for each subchannel, the maximum number of
users can be served simultaneously is $J$, which is limited by the
coherence time and accordingly the number of orthogonal pilots.

The downlink received signal for a user $k$
of MVNO $m$ on subchannel $n$ in the considered cell is given by
\beqn
    \hat{s}_k^n = \sqrt{p_k^n d_k g_k} \mathbf{h}_{k}^\mathrm{T}(n)\mathbf{f}_{k}(n)s_k^n +
    \sum_{j\neq k} \sqrt{p_j^n d_k g_k} \nonumber \\  \mathbf{h}_{k}^\mathrm{T}(n)
    \mathbf{f}_{j}(n)s_j^n +
    \sum_{l\neq o} \sqrt{p_l^n d_{kl} g_{kl}} \mathbf{h}_{kl}^\mathrm{T}(n)
    \mathbf{f}_{kl}(n)s_{l}^n + z_k^n,
\label{eq:ReceSignal} \eeqn where the first term in the right hand
side of (\ref{eq:ReceSignal}) denotes the desired signal for user
$k$, while the second and the third terms represent the subchannel
reuse interference within the cell and the interference from other
cells, respectively. Specifically, $p_k^n$ is the transmit power for
the link from the BS to user $k$ in subchannel $n$, $d_k$ and $g_k$
represent the path-loss and the shadowing gain, respectively.
$\mathbf{h}_{k}(n) \in \mathbb{C}^{A_m \times 1}$ denotes the
small-scale fading between the BS and user $k$ on subchannel $n$ which is
the Hermitian transpose of the uplink channel, where $A_m$ is the number of
antennas allocated to MVNO $m$. $\mathbf{f}_{k}(n) \in
\mathbb{C}^{A_m \times 1}$ represents the precoding vector used by
the BS. $s_k^n$ is the transmit signal, and $z_k^n$ denotes the
additive noise with distribution $\mathcal{CN}(0,N_0)$, where $N_0$
is the noise power spectral density. Accordingly, the received
signal-to-interference-plus-noise ratio (SINR) for user $k$ in
subchannel $n$ can be expressed as \beqn \Gamma_k^n = \frac{p_k^n
d_k g_k\mathbf{f}_{k}^T(n)
\mathbf{h}_{k}(n)\mathbf{h}_{k}^\mathrm{T}(n)
\mathbf{f}_{k}(n)}{WN_0 + I_{reuse} + I_{other cell}}, \eeqn where
$I_{reuse}$ and $I_{other cell}$ represent the interference terms.
The ergodic achievable downlink rate for user $k$ in subchannel
$n$ can be obtained as \beqn
    r_k^n = W\mathbb{E}[\log(1+\Gamma_k^n)].
\eeqn

The above ergodic achievable rate is difficult to calculate for
finite system dimensions. Instead, to achieve a tight approximation for finite
systems, an asymptotic analysis is
performed in~\cite{Hoydis2013} assuming that the number of antennas
$A_m$ and the number of users $K_m$ approach infinity while
keeping a finite ratio. Based on the results in~\cite{Hoydis2013}, ignoring
estimation noise and considering inter-user interference to be
negligible compared with noise and pilot contamination in large
scale multiuser (MU)-MIMO systems (the user channels decorrelate with large
number of BS antennas, and strong desired signal can be received
with little inter-user interference~\cite{Bjornson2014}), a
deterministic approximation of SINR for user $k$ of MVNO $m$ in
subchannel $n$ can be obtained as
\begin{eqnarray}
    \hat{\Gamma}_k(n) = \frac{1}{\frac{\bar{L}}{\rho_k(n) A_m}+\alpha(\bar{L}-1)},
\label{eq:SINR}
\end{eqnarray}
where $\rho_k(n)$
represents the transmit SNR, $\bar{L}=1+\alpha(L-1)$ and $L$ represents the number of
cells, $\alpha$ represents the intercell interference factor,
$\alpha(\bar{L}-1)$ is the pilot contamination caused by the reuse
of pilot sequences in other cells, which primarily limits the performance of massive MIMO systems~\cite{Marzetta2010}. Accordingly, the approximate achievable
downlink rate for user $k$ can be expressed as
\begin{eqnarray}
     r_k = \sum_{n\in \mathcal{C}_m} y_k(n)
     W\log(1+\hat{\Gamma}_k(n)),
\label{eq:DLRate}
\end{eqnarray}
where $\mathcal{C}_m$ is the set of subchannels allocated to MVNO
$m$, $y_k(n)$ is the assignment indicator with $y_k(n)=1$ indicating
subchannel $n$ is assigned to user $k$, and $y_k(n)=0$ otherwise.

Note that equation~(\ref{eq:SINR}) is obtained without assuming
that $A_m \gg K_m$ (i.e., assume $A_m \rightarrow \infty$ while keeping
$K_m$ fixed when analyzing the SINR). Instead, it considers more
practical settings where the number of BS antennas is not extremely
large compared with the number of users. This is suitable for our
system model since the BS antennas need to be partitioned for
different MVNOs.

\subsection{Wireless Virtualization Model}

For wireless network virtualization, isolation among different
virtualized wireless networks for different MVNOs is a basic
requirement which can be done at different levels (e.g., flow
level~\cite{Kokku2012,Kokku2013} and physical resource level
\cite{Malanchini2014,Belt2014,Kamel2014,Lu2014,Yang2013,Liu2013}).
In general, isolation at a higher level is simpler for
implementation while at the cost of possible inefficient allocation
and non-strict isolation. In contrast, isolation at a lower level
could achieve better resource utilization at the cost of higher
computational complexity. In this work, we consider the isolation to
be performed at physical resource  (i.e., subchannel, power,
and antenna) level.

Also, isolation at the physical resource level can be implemented in
different manners. The first is a static fixed sharing scheme with
which the MNVOs are preassigned a fixed subset of physical resources
in different domains, and the access is restricted within this fixed
subset. The second is a dynamic general sharing scheme with which
there is no restriction on the resource access, while the isolation
is achieved by guaranteeing certain pre-determined requirements
(e.g., minimum share of the resources). In this work, we adopt a
hybrid isolation scheme in between. Specifically, the InP reserves
certain amounts of resources for each MVNO according to
pre-determined service agreements, while the leftover resources can
be dynamically shared by all MVNOs (e.g., through auctions). Note
that this model can also be applied to the case that some MVNOs own
certain resources.

\subsection{The Proposed Hierarchical Auction Models}

In general, an auction process involves the following entities: a) bidders
who want to buy certain commodities, b) sellers who want to sell
certain commodities, and c) an auctioneer who hosts and directs the
auction process. An auction mechanism mainly involves the
following procedures:
\begin{itemize}
    \item \emph{Bidding procedure}: each bidder $i$ places a bid $b_i$ according to its own valuation $v_i$ of the
    item/items to be auctioned. The valuation is a private information
    which represents the maximum a bidder is willing to pay for the
    item/items. Different bidders could have different valuations for the
    same item/items.
    \item \emph{Allocation procedure}: after collecting the bids from
    all participating bidders, the auctioneer needs to determine how to allocate the
    item/items among the bidders for achieving certain objectives. A bidder is a winning
    bidder if the resource requirement in her bid is satisfied.
    \item \emph{Pricing procedure}: after determining the allocation, the
    auctioneer also needs to determine the price $q_i$ charged to
    each winning bidder $i$.
\end{itemize}

In this work, two hierarchical (i.e., two-level) combinatorial
auction models are proposed for wireless network virtualization as
shown in Fig.~\ref{fig:BasicAuctionModel} and
Fig.~\ref{fig:ExtendedAuctionModel}.

1) \emph{Single-seller multiple-buyer hierarchical auction model:}
We first consider a single-seller multiple-buyer case (i.e., the
model in Fig.~\ref{fig:BasicAuctionModel}) with which the bidders
(e.g., users) can only acquire resources from a single seller. The
entire hierarchical auction consists of sub-auctions in two
levels. Specifically, in the upper level, the InP, who owns the
physical resources, holds a sub-auction and acts as the seller as
well as the auctioneer. The MVNOs act as the bidders (i.e., buyers).
In the lower level, each MVNO then holds a sub-auction acting as the
seller and the subscribed users act as the bidders (accordingly,
there are $M$ sub-auctions in the lower level).

As stated in Section II-B, we consider a hybrid isolation scheme. In
this case, each MVNO $m$ reserves $C_m^{\mathrm{res}}$ number of
subchannels, $P_m^{\mathrm{res}}$ amount of power, and
$A_m^{\mathrm{res}}$ number of antennas, where $\sum_{m=1}^M
C_m^{\mathrm{res}} \leq C$, $\sum_{m=1}^M P_m^{\mathrm{res}} \leq
P$, and $\sum_{m=1}^M A_m^{\mathrm{res}} \leq A$. The leftover
$C^{\mathrm{up}} = C-\sum_{m=1}^M C_m^{\mathrm{res}}$ number of
subchannels, $P^{\mathrm{up}} = P-\sum_{m=1}^M P_m^{\mathrm{res}}$
amount of power, and $A^{\mathrm{up}} = A-\sum_{m=1}^M
A_m^{\mathrm{res}}$ number of antennas are the commodities to be
auctioned among the MVNOs in the upper level auction. Denote by $C_m
\geq 0$, $P_m \geq 0$, and $A_m \geq 0$ the resources obtained by
MVNO $m$ in the upper-level auction, then for each user of MVNO $m$
in the lower-level auction, the available resources are $\hat{C}_m =
C_m^{\mathrm{res}} + C_m$, $\hat{P}_m = P_m^{\mathrm{res}} + P_m$,
and $\hat{A}_m = A_m^{\mathrm{res}} + A_m$.

Note that the hierarchical auction model is not a simple combination
of two levels of separated auctions due to the involvement of
middlemen. Specifically, unlike the users (i.e., bidders) in the
lower-level auction, the MVNOs as middlemen do not have intrinsic
demands\footnote{Note that the demand of an MVNO is not simply the
sum of the demands from all users, since it may not be optimal for
her.}. Furthermore, the MVNOs do not have intrinsic valuations of
the resources, and the valuation depends on the resale revenue which
is also shown in~\cite{Tang2012}. In this case, the two level
auctions are interrelated and should be studied jointly (e.g., in a
way similar to that for analyzing a hierarchical game).

\begin{figure}[th]
    \begin{center}
    \includegraphics[width=0.25\textwidth]{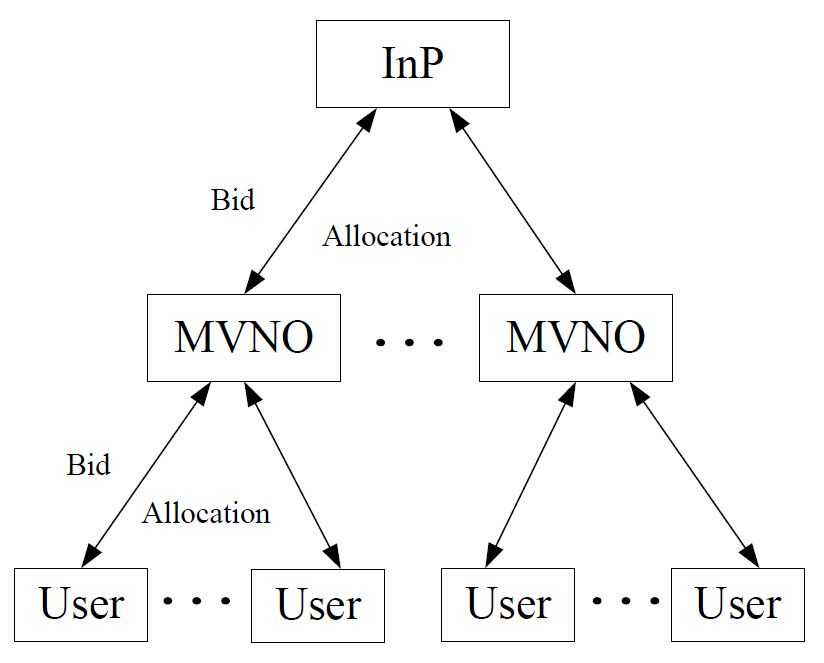}
    \end{center}
    \caption{A single-seller multiple-buyer hierarchical auction model.}
    \label{fig:BasicAuctionModel}
\end{figure}

In the proposed hierarchical auction model, we consider the bidders
to be self-interested each of which carefully chooses the bidding
strategy to maximize her own utility under the given designed
auction mechanism (including allocation and pricing procedures).
Specifically, we consider a quasilinear utility of the following form:
 \beqn
    u_i = v_i - q_i,
\eeqn which represents the difference between valuation and the price charged. The
equilibrium bidding behavior can be analyzed by using noncooperative
game theory (e.g., a Bayesian game considering that the valuations are
private information). Note that there are several desirable
properties when designing an auction mechanism: 1) individual
rationality; 2) incentive compatibility; 3) allocation efficiency.
The definitions for these are given below.

\emph{Definition 1:} An auction mechanism is \emph {individual
rational} if for any bidder $i$, $u_i \geq 0$.

The individual rationality indicates that a bidder will be never
charged more than her valuation of the received resource.

\emph{Definition 2:} An auction mechanism is \emph{incentive
compatible} (truthful) if and only if for every bidder $i$ with true
valuation $v_i$ and for any valuation declaration of the other
bidders $\tilde{b}_{-i}$, we have $u_i(v_i,\tilde{b}_{-i})\geq
u_i(\tilde{b}_i,\tilde{b}_{-i})$, where $\tilde{b}_i \neq v_i$ is
any non-truthful bidding strategy.

That is, truth telling is a dominant strategy for each bidder no
matter how other bidders place their bids. %This significantly
%simplifies the computation from the bidder side in the sense that
%each bidder does not need to analyze the complex Bayesian game.

\emph{Definition 3:} An allocation is efficient if the sum of
valuations of all accepted bids is maximized.

In the proposed scheme, these properties are achieved by the use of
combinatorial auction~\cite{Vries2003} with appropriate design of
allocation and pricing schemes. Specifically, for the sub-auction
mechanism in each level, we adopt combinatorial auction which allows
the bidders to express preferences over bundles or combinations of
resources (e.g., bundles of subchannels, power, and antennas for a
slice). The consideration of bidding for bundles of resources stems
from the fact that the value of a bundle of items may not be equal
to the sum of individual value of each item in the bundle. That is,
there are possible substitution and complementarity properties among
the items. In this case, the combinatorial auction can lead to more
efficient allocations (in terms of both auction process and outcome)
compared with the case where a traditional single-item auction is
repeated for each item in the bundle.

In a combinatorial auction, for each bidder $i$, the valuation $v_i$
is a mapping from a bundle of items $S_i$ to a real value. A bidder
is a winning bidder and receives value $v_i(S_i)$ if all the items
in the bundle $S_i$ are allocated, otherwise the bidder receives
nothing and the received value is zero (i.e., $v_i(\emptyset)=0$).
The valuation function should also satisfy the monotone property.
That is, for any $S\subseteq T$ we have that $v_i(S)\leq v_i(T)$.
Note that different from single-item auction, there could exist
multiple winning bidders in a combinatorial auction.

%and each receives a
%mutually disjoint bundle of items $S_i$ (i.e., an item can only be
%allocated to one bidder)\footnote{This concept will be extended in
%the lower-level auctions, since we consider massive MU-MIMO with
%which a subchannel could be shared by multiple users.}.

The social
welfare obtained by a combinatorial auction is denoted by \beqn V
=\sum v_i(S_i), \eeqn and a socially efficient allocation is an
allocation with maximum social welfare among all allocations such
that \beqn
    V^* = \max \sum v_i(S_i).
\eeqn In our scheme, the use of combinatorial auction can achieve
social efficiency in each level if all bidders bid truthfully, while
pricing schemes are designed for ensuring the incentive
compatibility. The details of incentive compatible pricing schemes
will be introduced in Section III.

\begin{figure}[th]
    \begin{center}
    \includegraphics[width=0.2\textwidth]{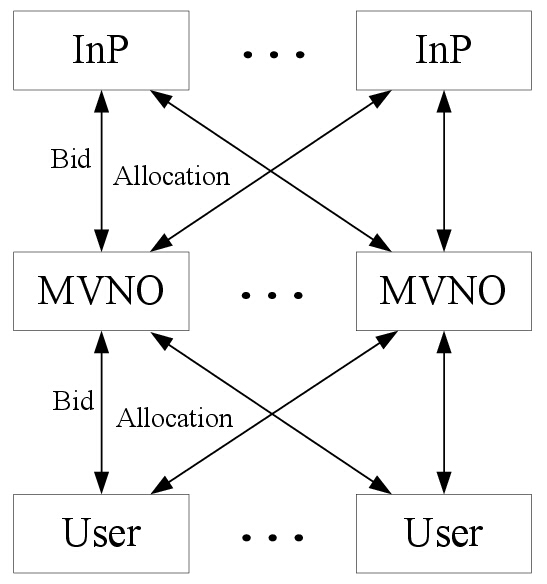}
    \end{center}
    \caption{A multiple-seller multiple-buyer hierarchical auction model.}
    \label{fig:ExtendedAuctionModel}
\end{figure}

2) \emph{Multiple-seller multiple-buyer hierarchical auction model:}
We then extend the above single-seller multiple-buyer model to a
multiple-seller multiple-buyer model as shown in
Fig.~\ref{fig:ExtendedAuctionModel}. The main difference is that
there exists multiple sellers at each level and each bidder could
acquire resources from one of the multiple sellers\footnote{We
assume that a user will not be served simultaneously by multiple
MVNOs.}. Compared with the single-seller model, the multiple-seller
model provides more flexibility for bidders which accounts for
service selection (or user association) and accordingly induces
higher efficiency in resource allocation. For example, in the lower
level, the user association is not fixed for all $K = \sum_k K_m$
users. Instead, a user will be associated with the MVNO that can
satisfy her resource requirement. Accordingly, the competition among
MVNOs is explicitly captured in the sense that the MVNO with larger
amount of resources could attract more users.

It is worth noting that different from the single-seller model, the
MVNOs and InPs are not acting as the auctioneers in the
multiple-seller model. Instead, service brokers are introduced as
external auctioneers at each level on behalf of the sellers (i.e.,
the MVNOs and the InPs). The bidders in each level submits bids to
the corresponding service broker who then determines the resource
allocation and pricing.

\section{Wireless Network Virtualization as a Single-Seller Multiple-buyer Hierarchical Combinatorial Auction}

In this section, we will show how the proposed single-seller
multiple-buyer hierarchical combinatorial auction model can be
applied for virtualization of wireless resources in OFDMA-based 5G
networks with massive MIMO. Specifically, we will answer the
following questions: 1) how do the bidders (i.e., users in the
lower-level auction and MVNOs in the upper-level auction) place
their bids? 2) how to determine the set of winning bids (i.e., the
winner determination problem in the context of combinatorial
auction)? 3) how to solve the WDP in a computationally tractable
manner? 4) how to price the winning bidders such that incentive
compatibility can be achieved? Besides, theoretical analysis of the
properties of the proposed hierarchical auction mechanism is also
provided.

\subsection{How to Place a Bid?}

1) \emph{Bids for users:} We consider two cases.

\textbf{Case I:} We consider that the users explicitly express their
intrinsic physical resource demands (which is the case considered in
most existing work) when applying auction schemes for wireless
resource allocation. In this case, each user explicitly requests a
bundle of resources $S_k$ from her associated MVNO $m$ with \beqn
S_k =\{\mathbf{y}_k, \mathbf{p}_k, A_k\}, \eeqn where $\mathbf{y}_k
=[y_k(n)]^{\hat{C}_m}$, $\mathbf{p}_k = [p_k(n)]^{\hat{C}_m}$, and
$A_k$ are the subchannel request vector, power request vector, and
antenna request, respectively. Note that in this paper, energy
efficiency is not considered. Therefore, all the available antennas
of an MVNO will be activated for achieving the best performance.
Accordingly, we have $A_k = \hat{A}_m, ~\forall k \in
\mathcal{K}_m$.

We assume that the users are single-minded~\cite{Nisan2007} who are
only interested in a specific set of commodities, and the valuation
will be a specified scalar value if the whole set is allocated, and
will be zero otherwise. The definition is given as
follows:

\emph{Definition 4:} A bidder $k$ is single-minded if there is a set
of commodities $S_k$ and a scalar value $a$ such that the valuation
\beqn
    v_k(S) = \left\{\begin{aligned}
    a, ~~~~~~~\forall S\supseteq S_k, \\
    0, ~~~~~ \mbox{otherwise}.
\end{aligned}\right.
\eeqn

For a single-minded user $k$ with explicit resource demand, the bid
can be simply expressed as a pair $\{S_k,b_k(S_k)\}$. The valuation
of the requested resources depends on the application type and could
be different for different users. In this paper, we consider a
linear function $v_k(r_k(S_k)) = \delta_k r_k(S_k)$, where
$\delta_k$ is a constant and $r_k(S_k)$ is the achievable rate if
the requested resource bundle $S_k$ is allocated. The consideration
of valuation functions for different types of applications is
straightforward and will not change the formulation and analysis.
Also, the valuation function considering desirability of the
resources can be designed.

\textbf{Case II:} We consider that each user $k$ implicitly
expresses her resource demand by simply indicating an intrinsic
target rate of $\tilde{R}_k$. How the resources in different
dimensions are allocated to satisfy the rate requirements is left
with the MVNOs who have much stronger computation power.

In this case, the bid pair for user $k$ with implicit resource
demand is expressed as $\{\tilde{R}_k,b_k(\tilde{R}_k))\}$ and the
same valuation function applies for the implicit case.

2) \emph{Bids for MVNOs:} For MVNOs, the bids
are different from that of the users since the MVNOs are
middlemen who do not have intrinsic demands and valuations.
Accordingly, they cannot be considered to be single-minded. Instead,
in the upper-level auction, each MVNO will need to submit bids for
all possible set of resource bundles (i.e., general
valuations)\footnote{In practice, the set of bids submitted can be
restricted considering only available/feasible ones.}. Also, we
consider that the MVNOs explicitly express their physical resource
demands to the InP, and the bid for each MVNO can be expressed as a
combination of pairs $\{S_m,b_m(S_m)\},~\forall S_m \in \Omega_m$,
where $S_m = \{C_m, P_m, A_m\}$ is a tuple representing
the requested number of subchannels $C_m$, the amount of transmit
power $P_m$, and the number of antennas $A_m$, and $\Omega_m =
\{C^{\mathrm{up}},P^{\mathrm{up}},A^{\mathrm{up}}\}$ represents the
set of all possible resource bundles.

%Also, the MVNOs do not have intrinsic valuations of the
%requested resources but depend on the resale gain in the lower-level
%auction.

Note that although an MNVO can submit combinations of bids, we adopt
XOR-bid with which at most one bid can be accepted for each
MVNO.  For an MVNO $m$ with auctioned resource tuple $S_m$, let us denote
by $q_k(S_m) \geq 0$ the price charged to user $k$ in the
lower-level auction. Then the valuation of resource $S_m$ can be
expressed as \beqn v_m(S_m) = \sum_{k\in \mathcal{K}_m} q_k(S_m) -
q_m^{\mathrm{res}}, \eeqn where $q_m^{\mathrm{res}}$ is the cost for
reserved resources.

A summary of the key elements of the hierarchical auction model is
provided in Table I.

\begin{table}
\caption{A summary of key elements in the proposed hierarchical
auction model} \label{table:summary}
\begin{tabular}{|c|c|c|c|c|c|}
  \hline
   & Demand & Valuation & Bid & Utility & Price \\
  \hline
  $UE_{ex}$ & $S_k$ & $v_k(S_k)$ & $b_k(S_k)$ & $v_k-q_k$ & $q_k$ \\
  \hline
  $UE_{im}$ & $\tilde{R}_k$ & $v_k(\tilde{R}_k)$ & $b_k(\tilde{R}_k)$ & $v_k-q_k$ & $q_k$ \\
  \hline
  MVNO & N.A. & $\sum_{k}q_k-q_m^{\mathrm{res}}$ & $\{b_m(S_m)\}$ & $\sum_{k} q_k-q_m-q_m^{\mathrm{res}}$ & $q_m$ \\
  \hline
  InP & N.A. & N.A. & N.A. & $\sum_{m} q_m$ & N.A. \\
  \hline
\end{tabular}
\end{table}

\subsection{How to Determine the Winning Bids?}

After collecting the bids from the bidders, the auctioneer needs to
determine which set of bids to be accepted. For the proposed
hierarchical auction model, we need to formulate the winner
determination problems (WDPs) for both InP as the auctioneer in the
upper-level auction and MVNOs as auctioneers in the lower-level
auctions.

\subsubsection{WDP formulation for the InP}
Each MVNO $m$ submits a bid combination $\{S_m,b_m(S_m)\}$ for all
possible $S_m \subseteq \Omega_m$ and the objective of the InP is to
maximize the sum value of accepted bids. Accordingly, the WDP for
the InP in the upper-level auction is formulated as
\begin{eqnarray}
    \max_{x_m(S_m)} \sum_{m\in \mathcal{M}} \sum_{S_m \subseteq \Omega_m} b_m(S_m) x_m(S_m) \nonumber \\
    \mbox{s.t.} ~~~~ \sum_{S_m \subseteq \Omega_m} \sum_{m\in \mathcal{M}} x_m(S_m)C_m(S_m) \leq C^{\mathrm{up}},  \\
    \sum_{S_m \subseteq \Omega_m} \sum_{m\in \mathcal{M}} x_m(S_m) A_m(S_m) \leq A^{\mathrm{up}},  \\
    \sum_{S_m \subseteq \Omega_m} \sum_{m\in \mathcal{M}} x_m(S_m) P_m(S_m) \leq P^{\mathrm{up}},  \\
    \sum_{S_m \subseteq \Omega_m} x_m(S_m) \leq 1, \forall~m \in \mathcal{M}, \\
    x_m(S_m) \in \{0,1\}, \forall~S_m, m,
\end{eqnarray}
where $x_m(S_m)$ is a binary variable with $x_m(S_m)=1$ indicating
that the requested resource bundle $S_m$ is accepted and
$x_m(S_m)=0$ otherwise, $C_m(S_m)$, $P_m(S_m)$, and $A_m(S_m)$
represent the number of subchannels, the amount of power, and the
number of antennas requested in $S_m$. The constraints (12)-(14)
ensure that the accepted resource demands do not exceed the
available capacity, while the constraint (15) ensures that no bidder
receives more than one set of resources.

\begin{remark}
Since the power is divisible, having general valuations for each
specific amount of power is impossible. In this case, in the upper
level auction, we consider the power to be discretized based on
certain unit, and each MVNO specifies the number of power units
required in each resource bundle. While such discretization is not
necessary in the lower-level auction since the user is
single-minded.
\end{remark}

\subsubsection{WDP formulation for the MVNOs}

The WDP for each MVNO $m$ in the lower-level auction considering
single-minded users with explicit resource requirements (i.e., Case
I) is formulated as
\beqn
    \max_{x_k} \sum_{k \in \mathcal{K}_m} b_k(S_k) x_k  \nonumber \\
    \mbox{s.t.}~~~
    \sum_{k\in \mathcal{K}_m} p_k(S_k) x_k \leq \hat{P}_m, \\
    \sum_{k\in \mathcal{K}_m}\sum_{S_k\ni n}  x_k \leq J, ~\forall n,  \\
    x_k \in \{0,1\}, ~\forall k,
\eeqn where $x_k$ is the decision variable indicating whether the
resource demand from user $k$ can be satisfied, $p_k(S_k) =
\sum_{n\in \hat{\mathcal{C}}_m}p_k(n)$ is the total power requested
by user $k$, and $S_k\ni n$ if $y_k(n)=1$. The objective is also to
maximize the sum value of accepted bids, and thus achieving
efficient resource allocation as required by wireless
virtualization\footnote{Note that the fairness among users is not
considered.}. The first constraint ensures that the sum power of all
accepted bids does not exceed the available total power. The second
constraint indicates that the number of users sharing a subchannel
cannot exceed $J$.

Note that the WDPs for MVNOs with explicit resource requirement from
users are simpler comparing with that for the InP since the users
are single-minded and therefore do not require general valuations.

Similarly, the WDP for each MVNO $m$ considering single-minded users
with implicit resource requirements (i.e., Case II) is formulated as
\begin{eqnarray}
    \max_{x_k,z_k(\tilde{S}_k)} \sum_{k \in \mathcal{K}_m} b_k(\tilde{R}_k) x_k \nonumber  \\
    \mbox{s.t.}~~~
    r_k(\tilde{S}_k(\tilde{\mathbf{y}}_k,\tilde{\mathbf{p}}_k,A_m)) = \tilde{R}_k, ~\forall \tilde{S}_k \in \Omega_k, \label{eq:rate_con} \\
    \sum_{k\in \mathcal{K}_m} \sum_{\tilde{S}_k \in \Omega_k} \sum_{\tilde{S}_k\ni n}  z_k(\tilde{S}_k) \leq J, ~\forall n,  \label{eq:user_con} \\
    \sum_{k\in \mathcal{K}_m} \sum_{\tilde{S}_k\in \Omega_k} p_k(\tilde{S}_k) z_k(\tilde{S}_k) \leq \hat{P}_m, \\
    \sum_{\tilde{S}_k \in \Omega_k} z_k(\tilde{S}_k) \leq 1,~\forall k,  \label{eq:exclusive_con} \\
    x_k, z_k(\tilde{S}_k) \in \{0,1\},
\end{eqnarray}
where $x_k=1$ indicates the rate demand of user $k$ can be
satisfied, and $x_k=0$ otherwise, $\tilde{S}_k =
\{\tilde{\mathbf{y}}_k,\tilde{\mathbf{p}}_k,A_m\}$ is the tuple
representing a resource allocation profile considering the freedoms
in power, frequency, and spatial domains. Specifically,
$\tilde{\mathbf{y}}_k = [\tilde{y}_k(n)]^{\hat{C}_m}$ is the
subchannel allocation vector with $\tilde{y}_k(n)=1$ indicating
subchannel $k$ is assigned to user $k$. $\tilde{\mathbf{p}}_k =
[p_k(n)]^{\hat{C}_m}$ is the power allocation vector.
Equation~(\ref{eq:rate_con}) is the rate constraint which requires
the resource allocation profile $\tilde{S}_k$ should satisfy the
user rate requirement\footnote{Note that if minimum rate requirement
is considered, the `$=$' in (\ref{eq:rate_con}) will be replaced by
`$\geq$'. The use of minimum rate requirement will not change the
analysis since it only increases the number of allocation strategies
for satisfying the rate requirement.}. The achievable rate can be
calculated from (\ref{eq:DLRate}). The set of all resource
allocation profiles $\tilde{S}_k$ for user $k$ constitutes the set
$\Omega_k$. $z_k(S_k)$ is an indicator with value $1$ indicating the
allocation profile $S_k$ is accepted, and $0$ otherwise. Apparently
$x_k =1$ if there exists $z_k(\tilde{S}_k) = 1$ for all
$\tilde{S}_k$. Equation (\ref{eq:user_con}) indicates that the
number of users sharing a subchannel cannot exceed $J$. Equation
(\ref{eq:exclusive_con}) indicates that for a user $k$, at most one
allocation profile $\tilde{S}_k \in \Omega_k$ can be accepted.

\begin{remark}
It can be observed that the above optimization problem is
fundamental for resource allocation which jointly considers
\emph{feasibility}, \emph{admission control}, and \emph{allocation}
problems in a unified framework. Specifically, given certain degrees
of freedom in resource allocation, the solution of the above problem
gives the answers on whether the system can satisfy the rate
requirements of all users (i.e., feasibility problem). And if not,
which users should be accepted (i.e., admission control problem),
and how the resources can be allocated so that the sum utility of
all users is maximized (i.e., allocation problem). %While most
%existing work on resource allocation assume the optimization problem
%is feasible (or external admission control scheme is used for user
%removal if the original problem is infeasible.)
\end{remark}

\subsection{How to Solve the WDPs in the Hierarchical Auction?}

Similar to solving a hierarchical game (e.g., Stackelberg game), the
method of backward induction can be used for solving the proposed
hierarchical auction problem. In this case, we start with solving
the WDP for an MVNO $m$ in the lower-level auction.

\subsubsection{Solving the WDP in the lower-level auction}

It has been shown in the literature that the winner determination
problem of combinatorial auctions is an integer programming problem
which is NP-hard even for the single-minded
cases~\cite{Sandholm2002}. For simplifying the original problem, we
consider two assumptions as follows:

\emph{Assumption 1:} We assume the subchannels to be homogeneous for
each end user (i.e., the channel gains of different subchannels are
the same for a user, while they can be different for different
users). Accordingly, equal power is allocated to each assigned
subchannel.

\emph{Assumption 2:} The achievable rate of a user is independent of
which other users are sharing the same channel.

These assumptions are practical for massive MIMO systems since the
small-scale fading are averaged out and only the large-scale fading
(e..g, path-loss and shadowing) affects. Accordingly, the SINR is
constant with respect to frequency since the slow fading
coefficients are independent of frequency~\cite{Marzetta2010}. Also,
the users' channels decorrelate with the increasing number of BS
antennas~\cite{Bjornson2014}.

Based on these two assumptions, the WDP for MVNO with explicit user
resource request can be reformulated as
\beqn
    \max_{x_k} \sum_{k \in \mathcal{K}_m} b_k(S_k) x_k \nonumber \\
    \mbox{s.t.}~~~
    \sum_{k\in \mathcal{K}_m} p_k(S_k) x_k \leq \hat{P}_m,    \\
    \sum_{k \in \mathcal{K}_m} c_k(S_k) x_k \leq \hat{C}_m J,   \\
    x_k \in \{0,1\}, ~\forall k,
\eeqn where $c_k \in [0,\tilde{C}_m]$ and $p_k \in [0,\tilde{P}_m]$
represent the requested number of subchannels and power,
respectively. Note that in this reformulation, we can use $c_k$ and
$p_k$ instead of $\mathbf{y}_k$ and $\mathbf{p}_k$ since we consider
the channels to be homogeneous. Therefore, the expression for the resource
bundle requested by user $k$ is simplified as $S_k=\{c_k,p_k,A_m\}$.

It can be seen that the simplified problem is also an integer
program which is still NP hard. Accordingly, there are fundamental
tradeoffs between the social efficiency (optimality) and
computational complexity. In general, there are two possible ways
for solving the reformulated WDP in a computationally tractable
manner. The first is to find the exact optimal solution for a small
problem (e.g., through dynamic programming or branch-and-bound
method) at the cost of possible high computational complexity. The
second is to design low-complexity algorithms to find approximate
optimal solutions for large scale problems.

We first propose a dynamic programming-based algorithm for obtaining
the exact solution. The main idea is to divide the original problem
$\mathrm{WDP}_m(K_m,\hat{P}_m,\hat{C}_m)$ into similar sub-problems
which can be solved recursively. Specifically, we partition the
resource allocation into $K_m$ stages, and denote by
$\mathrm{WDP}_m(k,\mathbf{e}(k))$ the subproblem which considers the
resource allocation to $k$ users with available resources
$\mathbf{e}(k) = [e_c(k),e_p(k)]^\mathrm{T}$, where $e_c(k)$ and
$e_p(k)$ denote, respectively, the available subchannels and power at stage $k$,
which can also be regarded as the state variables. In
each stage $k$, the MVNO $m$ allocates $u_c(k)$ subchannels and
$u_p(k)$ units of power to user $k$. Denote by $\mathbf{u}(k) =
[u_c(k),u_p(k)]^\mathrm{T}$. Accordingly, the state transition equation
can be expressed as \beqn
    \mathbf{e}(k+1) = \mathbf{e}(k) - \mathbf{u}(k). \nonumber
\eeqn Also, denote by $\mathbf{x}^*(k,\mathbf{e}(k)) =
[x_1^*,\ldots,x_k^*]$ the optimal solution to the subproblem
$\mathrm{WDP}_m(k,\mathbf{e}(k))$ with the corresponding optimal value
$f(k,\mathbf{e}(k))$. Accordingly, we can have 
$f(k,\mathbf{e}(k)) =$

$\max \{f(k-1,\mathbf{e}(k)), f(k-1,\mathbf{e}(k)-\mathbf{u}(k)) +
b_k(S_k)\}$, 
%\begin{eqnarray}
%    f(k,\mathbf{e}(k)) =&   \nonumber \\
%       \max \{f(k-1,\mathbf{e}(k)), f(k-1,\mathbf{e}(k)-\mathbf{u}(k)) +
%    b_k(S_k)\}, & \nonumber
%\end{eqnarray} 
for $k=2,\ldots,K_m$. The initial condition is given by
\beqn
    f(1,\mathbf{e}(1)) = \left\{\begin{aligned}
    b_1(S_1), ~e_c(1)\geq c_1(S_1), e_p(1)\geq p_1(S_1),  \nonumber \\
    -\infty, ~~~~~~~~~~~~~~~~~~~~~~~~~~  \mbox{otherwise}. \nonumber
\end{aligned}\right.
\eeqn

Note that for single-minded users, we only need to consider the
state transition for fixed $u_c(k)$ and $u_p(k)$ at each state.
Specifically, $u_c(k) = c_k(S_k)$ and $u_p(k) = c_p(S_k)$ if $x_k=1$,
and $u_c(k) = 0$ and $u_p(k) = 0$ if $x_k=0$.
The details  are given in \textbf{Algorithm 1}.

\begin{algorithm}
\caption{A dynamic programming-based algorithm to solve the WDP
for MVNO with explicit user resource request}
\begin{algorithmic}
   \STATE 1. Initialization: collect $b_k(S_k)$ from each user $k$, and calculate initialization condition $f(1,\mathbf{e}(1))$.
   \STATE 2. For each stage and each possible state calculate the optimal value function $f(k,\mathbf{e}(k))$.
   \STATE 3. Output: Find $f(K_m,\hat{C}_m,\hat{P}_m)$ and obtain the corresponding optimal allocation in
   each stage by using \beqn
    x^*_k = \arg \max_{x_k} \{f(k-1,\mathbf{e}(k)),
    f(k-1,\mathbf{e}(k)-\mathbf{u}(k)) + b_k(S_k)\}. \nonumber
\eeqn
\end{algorithmic}
\end{algorithm}

Compared with exhaustive enumeration with time-complexity of $O(2^{K_m})$, the time-complexity of the dynamic programming-based
algorithm is of $O(K_m \Theta_m^2)$, where $\Theta_m =
\max\{\hat{C}_m,\hat{P}_m\}$. The significant reduction in
time-complexity stems from the fact that the optimal value for each stage
and each state is stored and calculated only once, while it needs to
be calculated repeatedly in exhaustive enumeration.

We also implement a polynomial-time greedy algorithm to obtain
an approximate optimal solution which is based on the algorithm
proposed in~\cite{Lehmann2002},~\cite{Zaman2013} for WDP with
single-minded bidders. The main idea of the greedy algorithm is to
allocate the resource to bidders with larger normalized value.
Specifically, after collecting all the bids from the users, the MVNO
as the auctioneer sorts the bids in a decreasing order of
$\frac{b_k}{\sqrt{|S_k|}}$ which is viewed as the normalized value
of a bid. Note that since the requested resource bundle $S_k$
consists of multiple dimensional resources, we consider $|S_k| =
\omega_c c_k + \omega_p p_k$, which is a weighted sum of the number
of different type of resources requested. The details of the greedy
algorithm are presented in \textbf{Algorithm 2}.

\begin{algorithm}
\caption{A greedy algorithm to solve the WDP for an MVNO with
explicit user resource request}
\begin{algorithmic}
   \STATE 1. Initialization: set $c=0$ and $p=0$.
   \STATE 2. For each submitted bid pair $\{S_k, b_k(S_k)\}$, calculate
   $b_k(S_k)/\sqrt{|S_k|}$. Re-index all bid pairs such that
   \beqn
        \frac{b_1(S_1)}{\sqrt{|S_1|}} \geq
        \frac{b_2(S_2)}{\sqrt{|S_2|}} \geq \cdots \geq
        \frac{b_{K_m}(S_{K_m})}{\sqrt{|S_{K_m}|}}.
   \eeqn
   \STATE 3. For $k=1:K_m$, if $c + c_k \leq \hat{C}_m J$ and $p + p_k \leq \hat{P}_m$,
   then allocate $c_k$ number of subchannels and $p_k$ units of power to corresponding
   user $k$.
\end{algorithmic}
\end{algorithm}

Note that with this algorithm we may only obtain an approximate
optimal solution while the dynamic programming algorithm can obtain
the exact solution. This difference will have impacts on the choice
of pricing scheme to guarantee the incentive compatibility which
will be shown next.

For the WDP of MVNO with implicit user resource request, although the
users are single-minded, there exist combinations of resource
allocation strategies for achieving the target rate due to the
freedoms in multiple domains. Accordingly, it is equivalent to a WDP
formulation with general valuations. Given assumptions 1 and 2, the
WDP for the MVNO with implicit user resource request can be
reformulated as a WDP with general valuations as follows:
\begin{eqnarray}
    \max \sum_{k \in \mathcal{K}_m} \sum_{\tilde{S}_k \in \Omega_k} b_k(\tilde{R}_k) x_k(\tilde{S}_k) \nonumber \\
    \mbox{s.t.}~~~
    r_k(\tilde{S}_k(\tilde{c}_k,\tilde{p}_k,A_m)) = \tilde{R}_k, ~\forall \tilde{S}_k \in \Omega_k, \\
    \sum_{k\in \mathcal{K}_m} \sum_{\tilde{S}_k\in \Omega_k} p_k(\tilde{S}_k) x_k(\tilde{S}_k) \leq P_m, \\
    \sum_{k \in \mathcal{K}_m} \sum_{\tilde{S}_k\in \Omega_k} c_k(\tilde{S}_k) x_k(\tilde{S}_k) \leq \hat{C}_m J, \\
    %c_k(\tilde{S}_k) \leq \tilde{C}_m, \\
    \sum_{\tilde{S}_k\in \Omega_k} x_k(\tilde{S}_k) \leq 1,~\forall k, \\
    x_k(\tilde{S}_k) \in \{0,1\}.
\end{eqnarray}

To solve the above problem, we first need to find the set of
resource allocation strategies (i.e., $\Omega_k$) which satisfy the
rate requirement. Specifically, all the antennas available will be
activated. Also, we assume the channels to be homogeneous for a
user. In this case, for each number of requested subchannels, the
power required for satisfying the rate requirement can be calculated
according to~(\ref{eq:DLRate}). After obtaining the set $\Omega_k$,
we extend the previous dynamic programming algorithm and greedy
algorithm for single-minded bidders to accommodate the general
valuation case.

Specifically, for the dynamic programming algorithm, the main
difference is that when calculating the optimal value function, it
is required to consider all possible state transitions due to the
general valuations. Specifically, \beqn f(k,\mathbf{e}(k)) =
\max_{\mathbf{u}(k)} \{b_k(\mathbf{u}(k)) +
f(k-1,\mathbf{e}(k)-\mathbf{u}(k))\}, \nonumber \eeqn where $u_c(k)
\in [0,\tilde{C}_m]$ and $u_p(k) = p_k(u_c(k))$.

To extend the greedy algorithm  for the WDP with general XOR bid, we
consider the bid combinations
$\{\tilde{S}_k,b_k(\tilde{S}_k)\}\footnote{Note that
$b_k(\tilde{S}_k)$ is equivalent to $b_k(\tilde{R}_k)$ which can be
simply obtained as a linear function of $\tilde{R}_k$.}, \forall
\tilde{S}_k \in \Omega_k$ submitted by user $k$ as combinations of
bid pair $\{\tilde{S}_k, b_k(\tilde{S}_k)\}$ submitted by virtual
single-minded bidders the number of which is equal to the number of
all possible bid combinations. For example, there exists
$|\Omega_k|$ number of virtual bidders with user $k$. Note that due
to the XOR bid, at most one virtual single-minded bidder of each
user can be accepted. To address this problem, we introduce the
concept of virtual commodity corresponding to a user. The bid
combinations of user $k$ are extended as $\{\tilde{S}_k \bigcup k,
b_k(\tilde{S}_k)\}$. Since the virtual commodity $k$ can only be
allocated to one winning bidder, the original XOR bid can be
realized. The extended greedy algorithm is given in
\textbf{Algorithm 3}.

\begin{algorithm}
\caption{A greedy algorithm to solve the WDP for an MVNO with
implicit user resource request}
\begin{algorithmic}
   \STATE 1. Initialization: set $c=0$, $p=0$ and $x_k=0$ for each user $k$.
   \STATE 2. For each submitted bid pair $(\tilde{S}_k \bigcup k, b_k(\tilde{S}_k))$, calculate
   $b_k(\tilde{S}_k)/\sqrt{|\tilde{S}_k|}$. Re-index all bid pairs such that
   \beqn
        \frac{b_1(\tilde{S}_1)}{\sqrt{|\tilde{S}_1|}} \geq
        \frac{b_2(\tilde{S}_2)}{\sqrt{|\tilde{S}_2|}} \geq \cdots \geq
        \frac{b_T(\tilde{S}_T)}{\sqrt{|\tilde{S}_T|}},
   \eeqn where $T = \sum_k |\Omega_k|$.
   \STATE 3. For $k=1:T$, if $c + c_k(\tilde{S}_k) \leq \hat{C}_m J$, $p + p_k(S_k) \leq
   \hat{P}_m$, and $x_k = 0$, then allocate $c_k(\tilde{S}_k)$ number of subchannels and $p_k(S_k)$ amount of power to corresponding
   user $k$ and set $x_k = 1$.
\end{algorithmic}
\end{algorithm}

\subsubsection{Solving the WDP in the upper-level auction}

%The main difference is that the bid set is expanded which also includes the antennas.

Compared with users in the lower-level auction, the MVNOs are not
single-minded which results in combinations of XOR bids. Note that
the valuation of each resource bundle in the upper-level auction
depends on the resale gain in the lower-level auction, which can be
obtained by solving the lower-level auction supposing this resource
bundle is allocated. After obtaining the valuations of all resource
bundles, similar algorithms for solving the WDP with implicit
resource request can be applied.  For example, with the
dynamic programming-based algorithm, the resource allocation is
partitioned into $M$ stages, and the optimal value function of each
stage is \beqn f(m,\mathbf{e}(m)) = \max_{\mathbf{u}(m)}
\{b_m(\mathbf{u}(m)) + f(m-1,\mathbf{e}(m)-\mathbf{u}(m))\},
\nonumber
 \eeqn with $\mathbf{e}(m) = [e_c(m),e_p(m),e_a(m)]^\mathrm{T}$ and
$\mathbf{u}(m) = [u_c(m),u_p(m),u_a(m)]^\mathrm{T}$, where $u_c(m)
\in [0,\tilde{C}^{\mathrm{up}}]$, $u_p(m) \in
[0,\tilde{P}^{\mathrm{up}}]$, and $u_a(m) \in
[0,\tilde{A}^{\mathrm{up}}]$.

Note that in a practical situation the number of MVNOs may not be large
(e.g,. $m=3,4$). Also, it is reasonable that the InP sells the
resources to MVNOs only in a grouped manner (e.g., a group of 5 or
10 subchannels). Furthermore, the InP could impose restrictions on the
maximum allowable groups of subchannels each MVNO can bid for.
Accordingly, the bid combinations can be significantly reduced and
the complexity of finding the exact optimal solutions can be
reduced. Note that, apparently, restricting the bid combinations
would also incur a tradeoff between the computational complexity and
social efficiency.

\subsection{How to Price the Winning Bidders?}

\subsubsection{Pricing scheme with exact solution for the WDP}

The design of pricing scheme is of critical importance for achieving
incentive compatibility. With all bidders bidding truthfully, the
above WDPs which aim to maximize the sum of accepted bids can also
achieve social optimality at each level since $b_k = v_k$.

For single-commodity auctions, the second-price auction (i.e.,
Vickrey auction~\cite{Vickrey1961}) has been shown to be an
incentive compatible scheme with which the winning bidder (with
highest bid) pays the second highest bid. The VCG
scheme~\cite{Vickrey1961,Clarke1971,Groves1973}, as a generalization
of the second-price auction to multiple commodities, preserves the
incentive compatibility. The intuitive idea of VCG pricing is that a
bidder should pay the potential loss they impose to other bidders.
Specifically, with VCG pricing, a bidder $k$ will be charged
\beqn
    q_k^{\mathrm{vcg}} = \sum_{j \neq k} v_j(\bar{S}^*_j) - \sum_{j \neq k}
    v_j(S^*_j),
\label{eq:VCGPricing} \eeqn where $\bar{S}^*_j$ and $S^*_j$
represent, respectively, the resources obtained by bidder $j$ when bidder $k$ is
not participating and is participating. For a winning
bidder, the VCG price is the decrease of welfare of all other
bidders caused by her presence. Note that the VCG prices are
nonnegative.

Although the VCG pricing can achieve incentive compatibility, it is
not designed for maximizing the seller's revenue. In some cases, the
resulting revenue (e.g., $\sum q_k^{\mathrm{vcg}}$) can be even far
from the optimal one. For example, if there are sufficient resources
such that the requirements of all bidders can be satisfied, the VCG
price for each bidder is zero. This could cause problem for MVNOs whose valuations depend on
the revenue gained from resale. Specifically, with VCG pricing, the
valuation $v_m(S_m)$ could decrease with an increasing number of
resources in $S_m$ which may motivate the MVNO to lease less
resources.

To address this problem while guaranteeing the incentive
compatibility, we jointly use the VCG pricing with a base access
price. Specifically, each type of resource has a base access price,
and a user who is admitted (i.e., a successful bidder) will be
charged the larger of the base price and the VCG price. That is
\beqn
    q_k = \max\{q_k^{\mathrm{base}}, q_k^{\mathrm{vcg}}\}.
\eeqn

Note that the base price is known to all users. If a bidder is aware
that the valuation of the requested resources is even less than the
base price, she will not place the bid. In this pricing scheme, the
base price can guarantee certain revenue for the MVNO, while the VCG
price could represent the impact of satisfying the resource
requirement of a user to the other users. For example, if there are
sufficient resources, the VCG price for a bidder can be low.
However, when there is an intense resource competition, the more
resource a user requests, the higher will be the probability that
other users will not be admitted, and the higher will be the VCG
price charged to this user.

Although the objective of each auctioneer is to maximize the social
welfare, with such a pricing scheme, we can also achieve approximate
optimal seller's revenue. Also, the incentive compatibility can be
preserved.

\subsubsection{Pricing scheme with approximate solution for the WDP}

The VCG pricing can preserve incentive compatibility only if the WDP
is solved exactly (i.e., optimal solution is obtained), while it is
incompatible with approximate algorithms in general\cite{Nisan2007}.
That is, if we obtain sub-optimal $\bar{S}^*$ and $S^*$
in~(\ref{eq:VCGPricing}) using approximate algorithms, the
corresponding VCG price $q_k$ is not incentive compatible.

Accordingly, the corresponding pricing scheme needs to be designed for
the greedy algorithm. Specifically, we propose to jointly use the
base access price with a VCG-like pricing. We first give the
definition of {\em blocking} as follows:

\emph{Definition 5:} Assume bidder $k$ with bid $b_k$ is a winning
bidder while a bidder $j$ with bid $b_j$ is not accepted. Then the
bidder $k$ uniquely blocks bidder $j$ if the bidder $j$ is a winning
bidder without bidder $k$'s participation in the auction.

Denote by $\mathbb{B}_k$ the set of bidders blocked by bidder $k$.
The main idea is to charge bidder $k$ according to the highest value
bid it blocks. Note that this value is also normalized as that in
the greedy algorithm. The VCG-like price for a winning bidder
$k$ is \beqn
    q_k^{\mathrm{greedy}} =
    \max_{j\in{\mathbb{B}_k}}\frac{b_j}{\sqrt{|S_j|}}\sqrt{|S_k|}.
\eeqn Accordingly, the winning bidder will be charged the larger of
the base access price and the VCG-like price as follows: \beqn
    q_k = \max\{q_k^{\mathrm{base}}, q_k^{\mathrm{greedy}}\}.
\eeqn

These two pricing schemes can be applied in both upper level and
lower level auctions in accordance with the solving algorithms used.

\subsection{Analysis of Properties of the Proposed Hierarchical Auction Mechanism}

In this part, we analyze the properties of the proposed hierarchical
auction mechanism. We first show that the proposed mechanism can
achieve individual rationality.

\begin{theorem}
The proposed hierarchical auction mechanism is individual rational
for all truthful bidders in both upper and lower level auctions.
\end{theorem}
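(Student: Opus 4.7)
The plan is to establish $u_i \ge 0$ separately at each level and for each of the two pricing regimes (exact-WDP with VCG, and greedy with the VCG-like charge), in every case by showing $q_i \le v_i$ for a winning truthful bidder. Losing bidders trivially satisfy $u_i = 0$, so the nontrivial work is confined to winners, and the proof reduces to verifying two bounds: $q_i^{\mathrm{base}} \le v_i$ and $q_i^{\mathrm{vcg}} \le v_i$ (resp.\ $q_i^{\mathrm{greedy}} \le v_i$), after which taking the max preserves the inequality.

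For the lower-level auction with the exact WDP, the base-price bound follows from the fact that the base price is public: a truthful bidder with $v_k < q_k^{\mathrm{base}}$ would decline to participate, hence every submitted (and therefore every winning) truthful bid satisfies $v_k \ge q_k^{\mathrm{base}}$. The VCG bound is the standard argument: with $b_k=v_k$ and an optimal winning allocation $S^*$, one writes $u_k = v_k(S_k^*) - q_k^{\mathrm{vcg}} = \sum_j v_j(S_j^*) - \sum_{j\ne k} v_j(\bar S_j^*) \ge 0$, because the allocation that assigns $\bar S^*$ to $\{j\ne k\}$ and nothing to $k$ is feasible when $k$ is present, so optimality of $S^*$ guarantees the dominance. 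Combining the two inequalities yields $q_k \le v_k$, hence $u_k \ge 0$.

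For the greedy variant, I would exploit the sorting by normalized value. Suppose $k$ is a winning bidder and $j \in \mathbb{B}_k$. Because $k$ was admitted before $j$ was processed in the greedy sweep (otherwise $k$ could not have blocked $j$), we must have $b_k/\sqrt{|S_k|} \ge b_j/\sqrt{|S_j|}$ for every such $j$. Therefore $q_k^{\mathrm{greedy}} = \max_{j\in\mathbb{B}_k}\frac{b_j}{\sqrt{|S_j|}}\sqrt{|S_k|} \le b_k = v_k$, and the base-price argument above again yields $q_k = \max\{q_k^{\mathrm{base}}, q_k^{\mathrm{greedy}}\} \le v_k$.

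For the upper-level auction, the same two arguments apply with MVNOs as bidders, the only subtlety being that an MVNO's valuation is the induced quantity $v_m(S_m) = \sum_{k\in\mathcal{K}_m} q_k(S_m) - q_m^{\mathrm{res}}$, determined by the outcome of the lower-level sub-auction conditional on $S_m$. This is the main obstacle: one must argue that $v_m$ is a well-defined, fixed number before the upper-level WDP is priced, so that the VCG/greedy bounds can be applied without circularity. The backward-induction resolution order described in Section III-C handles precisely this — the MVNO computes $v_m(S_m)$ for each candidate bundle by first solving the lower-level WDP assuming $S_m$ is received, and then submits the XOR-bid combination to the InP. With $v_m$ thus fixed, the identical VCG (or greedy-blocking) derivation gives $q_m \le v_m(S_m^*)$, whence $u_m = v_m(S_m^*) - q_m \ge 0$, completing the proof.
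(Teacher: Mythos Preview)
Your proposal is correct and follows essentially the same approach as the paper: for the exact-WDP regime you invoke the standard VCG nonnegativity argument (which the paper merely cites), and for the greedy regime you use the key inequality $b_k/\sqrt{|S_k|}\ge b_j/\sqrt{|S_j|}$ for every $j\in\mathbb{B}_k$, exactly as the paper does. Your treatment is in fact more thorough---you make the base-price participation constraint explicit, you spell out the VCG welfare-dominance step, and you address the well-definedness of the MVNO valuation via backward induction---whereas the paper's proof compresses the VCG case into a citation and splits the greedy case into $\mathbb{B}_k=\emptyset$ versus $\mathbb{B}_k\neq\emptyset$ rather than handling both at once.
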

\begin{proof}
The VCG scheme together with exact WDP solving algorithms has been
shown to be individual rational~\cite{Groves1973}. For the greedy
algorithm with corresponding pricing scheme, we consider two cases.
First, if there is no bidder blocked by a winning bidder $k$ (i.e.,
$\mathbb{B}_k = \emptyset$), then $q_k = \max\{q_k^{base},0\} =
q_k^{base}$. Then, $b_k \geq q_k^{base}$ and accordingly
$u_i \geq 0$. Second, if the set $\mathbb{B}_k \neq \emptyset$,
according to the pricing scheme, the price charged for bidder $k$ is
\beqn
    q_k = \max\{q_k^{base}, \max_{j\in \mathbb{B}_k}\frac{b_j}{\sqrt{|S_j|}}\sqrt{|S_k|}\}.
\eeqn While according to the allocation scheme, we have \beqn
\frac{b_k}{\sqrt{|S_k|}} \geq \max_{j\in \mathbb{B}_k}
\frac{b_j}{\sqrt{|S_j|}}.  \eeqn Accordingly, we can have $b_k \geq
q_k$ and $u_k \geq 0$.
\end{proof}

We also show the property of allocation efficiency in the following
theorem.

\begin{theorem}
With the proposed
dynamic programming algorithms for exact solution of the WDPs, the proposed hierarchical auction mechanism achieves allocation
efficiency with truthtelling bidders at each level.
\end{theorem}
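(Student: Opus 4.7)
The plan is to argue efficiency separately at each level, exploiting (i) the fact that the dynamic programming algorithms return \emph{exact} optimal solutions to the WDPs formulated in Section III-B, and (ii) the fact that when a bidder reports truthfully, her submitted bid coincides with her true valuation. Under these two observations, the WDP objective (a sum of accepted bids) coincides with the social welfare $V = \sum v_i(S_i)$ appearing in Definition 3, so an exact optimizer of the WDP is by definition an efficient allocation.

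First I would handle the lower level. For each MVNO $m$ acting as the auctioneer, truthtelling users give $b_k(S_k) = v_k(S_k)$ (resp.\ $b_k(\tilde{R}_k) = v_k(\tilde{R}_k)$ in Case II). The WDP for MVNO $m$ maximizes $\sum_{k \in \mathcal{K}_m} b_k x_k$ subject to the power, user-per-subchannel, and XOR feasibility constraints. Algorithm 1 (and its extension to the general-valuation case used for implicit resource requests) returns the exact optimal value $f(K_m,\hat{C}_m,\hat{P}_m)$ by a standard backward-induction argument: the Bellman recursion partitions feasible allocations by stage and state, each subproblem $\mathrm{WDP}_m(k,\mathbf{e}(k))$ is solved optimally, and the optimal substructure ensures the recovered $\mathbf{x}^*$ is globally optimal for the WDP. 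Substituting $b_k = v_k$ shows this maximizer also maximizes the social welfare over all feasible allocations within slice $m$, which is exactly efficiency at the lower level per Definition 3.

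For the upper level I would argue analogously, but I need to be careful about the fact that MVNO valuations are \emph{derived} from the lower level via $v_m(S_m) = \sum_{k \in \mathcal{K}_m} q_k(S_m) - q_m^{\mathrm{res}}$. Since each candidate bundle $S_m$ determines $\hat{C}_m, \hat{P}_m, \hat{A}_m$ and thus fixes the corresponding lower-level WDP, the valuation $v_m(S_m)$ is a well-defined function of $S_m$; with truthtelling MVNOs we have $b_m(S_m) = v_m(S_m)$ for every $S_m \subseteq \Omega_m$. The InP's WDP is the integer program (11)--(15) whose objective is $\sum_m \sum_{S_m} b_m(S_m) x_m(S_m)$, and the dynamic programming algorithm for the upper level (with state $\mathbf{e}(m) = [e_c(m),e_p(m),e_a(m)]^{\mathrm{T}}$ and the three-dimensional transition described in Section III-C.2) again returns an exact optimizer by the same Bellman argument. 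Replacing $b_m$ by $v_m$ in the optimal value then gives the maximum social welfare achievable across MVNOs at the upper level.

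The only non-routine step is justifying that efficiency at each level is the intended notion for a hierarchical mechanism with derived middleman valuations: because the MVNO valuations are completely pinned down by the (already efficient) lower-level allocations and the reserve costs, the two levels decouple once the lower-level outcomes are assumed truthful and optimal. Incentive compatibility, proved separately via the VCG and base-price scheme, ensures truthtelling is a dominant strategy at both levels, which is the missing premise needed to invoke the two efficiency claims above simultaneously. Combining these observations yields the theorem. I do not expect a major obstacle beyond being explicit that ``efficiency at each level'' is the sub-efficient notion already flagged in the footnote of Section I (the global social welfare relative to a fully centralized InP-to-user allocation is not claimed here).
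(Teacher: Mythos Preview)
Your proposal is correct and follows essentially the same approach as the paper: the paper's own justification is a single sentence noting that the result ``can be obtained immediately from the WDP formulation which aims to maximize the sum of accepted bids,'' together with the caveat that efficiency is only at each level and not globally. Your argument is simply a more explicit unpacking of this one-line observation (truthtelling gives $b_k = v_k$, exact DP maximizes $\sum b_k$, hence $\sum v_k$ is maximized per Definition~3), and your closing remark about the sub-efficient notion matches the paper's disclaimer exactly.
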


This result can be obtained immediately from the WDP formulation
which aims to maximize the sum of accepted bids. Note that the property of
allocation efficiency  is not preserved for the entire
hierarchial auction which will be shown in the numerical results.
Also, similar observation was made in~\cite{Tang2012}.

In the following, we will analyze the incentive compatibility of the
proposed mechanism. To this end, we first introduce the concepts of
\emph{monotone} and \emph{critical value} as follows:

\emph{Definition 6:} The allocation scheme of an auction is
\emph{monotone} if a bidder $k$ with bid $\{b_k(S_k),S_k\}$ is a
winning bidder, then all bidders $j$ with $\{b_j(S_j),S_j\} \succeq
\{b_k(S_k),S_k\}$ are also winning bidders.

The notation $\succeq$ denotes the preference over bid pairs.
Specifically, $\{b_j(S_j),S_j\} \succeq \{b_k(S_k),S_k\}$ if
$b_j(S_j)\geq b_k(S_k)$ for $|S_j| = |S_k|$ or $|S_j| \leq |S_k|$
for $b_j(S_j) = b_k(S_k)$. The monotonicity indicates that the
chance for obtaining a required bundle of resources can only be
increased by either increasing the bid or decreasing the amount of
resources required.

\emph{Definition 7:} For a given monotone allocation scheme, there
exists a \emph{critical value} $\hat{q}_k$ of each bid pair
$\{b_k(S_k),S_k\}$ such that $\forall b_k \geq \hat{q}_k$ will be a
winning bid, while $\forall b_k < \hat{q}_k$ is a losing bid.

The critical value can be seen as the minimum a bidder has to bid
for obtaining the requested bundle of resources. With the concepts
of \emph{monotonicity} and \emph{critical value}, we have the
following lemma~\cite{Mualem2002}:

\begin{lemma}
An auction mechanism is incentive compatible if the allocation
scheme is monotone and each winning bidder pays the critical value.
\end{lemma}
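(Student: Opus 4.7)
The plan is to show truthful bidding weakly dominates any deviation by checking, for each bidder $k$, the cases where the truthful report is a winning bid versus a losing bid, and arguing that in neither case can $k$ strictly increase her quasilinear utility $u_k = v_k - q_k$ by misreporting. Since monotonicity and the critical-value pricing together guarantee that the outcome for bidder $k$ depends on her own report only through the binary event ``$b_k \geq \hat{q}_k$,'' where $\hat{q}_k$ is determined entirely by the other bidders' reports $\tilde b_{-k}$ (and by $S_k$), the analysis reduces to comparing two possible outcomes per case.

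First I would handle the case where the truthful bid $v_k$ is a winning bid, so $v_k \geq \hat{q}_k$ and $u_k(v_k,\tilde b_{-k}) = v_k - \hat{q}_k \geq 0$. Any deviation $\tilde b_k$ with $\tilde b_k \geq \hat{q}_k$ keeps $k$ a winner (by the definition of critical value) and, since payment equals $\hat{q}_k$ independent of $\tilde b_k$, yields exactly the same utility. Any deviation with $\tilde b_k < \hat{q}_k$ makes $k$ a loser with utility $0 \leq v_k - \hat{q}_k$. So truth-telling is at least as good. Second, if $v_k < \hat{q}_k$, truth-telling yields utility $0$; any deviation $\tilde b_k < \hat{q}_k$ keeps $k$ a loser (utility $0$), while any deviation $\tilde b_k \geq \hat{q}_k$ turns $k$ into a winner with payment $\hat{q}_k > v_k$, giving strictly negative utility. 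Again, truth-telling weakly dominates.

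The key structural fact I will need to invoke explicitly is that $\hat{q}_k$ is independent of $b_k$ — it is a function of $\tilde b_{-k}$ and the requested bundle $S_k$ only — which follows from Definition~7 applied with the other bids held fixed. Monotonicity (Definition~6) is what guarantees that such a threshold exists at all: the set $\{b_k : k \text{ wins}\}$ must be upward-closed, hence an interval $[\hat{q}_k,\infty)$. Once these two observations are in place, the two-case argument above establishes $u_k(v_k,\tilde b_{-k}) \geq u_k(\tilde b_k,\tilde b_{-k})$ for every $\tilde b_k \neq v_k$ and every $\tilde b_{-k}$, which is precisely the condition in Definition~2.

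I do not expect any serious obstacle here; the result is standard (a characterization due to Mu'alem and Nisan, already cited in the statement), and the only subtlety is being careful that the single-minded bidder's valuation really does take the step-function form of Definition~4 so that ``winning vs.\ losing'' is the only dimension of the outcome that matters to $k$. Since the hierarchical model treats users as single-minded (Section~III-A) and the MVNOs' general-valuation bids are resolved via the virtual-commodity reduction to single-minded bids described before Algorithm~3, the lemma applies at both levels without further modification.
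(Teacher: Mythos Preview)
Your proof is correct and follows the standard two-case argument for this characterization. However, the paper does not actually prove Lemma~3.3 at all: it is stated with a citation to Mu'alem and Nisan~\cite{Mualem2002} and then invoked as a black box in the proof of Theorem~3.5. So there is no ``paper's own proof'' to compare against; you have simply supplied the (well-known) argument that the paper elected to import.

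One minor caution on your final paragraph: the claim that the lemma applies at both levels via the virtual-commodity reduction to single-minded bids is not how the paper proceeds, and in fact the paper explicitly warns right after Theorem~3.5 that the greedy-plus-critical-value mechanism is \emph{not} incentive compatible for general-valuation bidders. For the MVNOs (general valuations), the paper relies instead on exact WDP solution plus VCG pricing (Theorem~3.4), not on Lemma~3.3. This does not affect the correctness of your proof of the lemma itself, but you should not overstate its scope of application within the hierarchical mechanism.
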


The VCG scheme has been shown to be incentive
compatible~\cite{Groves1973} for allocation algorithms which solve
the WDP exactly. Accordingly, we can have the following theorem.

\begin{theorem}
For the sub-auction at each level of the proposed hierarchical
auction, the mechanism consisting of proposed dynamic programming
algorithm for WDP and VCG pricing with base access price achieves
incentive compatibility (for both single-minded and general
valuation cases).
\end{theorem}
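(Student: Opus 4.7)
The plan is to invoke Lemma~3.3 separately at each level of the hierarchy, by verifying its two hypotheses: (i) the allocation induced by the dynamic programming algorithm is monotone in the sense of Definition~6, and (ii) each winning bidder is charged exactly the critical value in the sense of Definition~7. Since the base VCG scheme applied to an exact solver is already known to be incentive compatible (as cited from~\cite{Groves1973}), the substantive task is to show that (a) the DP algorithm really does produce an exact optimum so that the VCG price is well-defined and equal to the critical value, and (b) taking the pointwise maximum with the base access price $q_k^{\mathrm{base}}$ preserves incentive compatibility.

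For monotonicity, I would argue directly from the objective of the WDP. The DP recursion computes an exact optimum of an integer program whose objective $\sum_k b_k(S_k) x_k$ is weakly increasing in each $b_k$ and whose feasible set shrinks monotonically in each resource request. Hence if bidder $k$ wins with bid pair $(b_k,S_k)$ and the same bidder alternatively submits $(b_k',S_k')$ with $b_k'\ge b_k$ and $|S_k'|\le |S_k|$, any allocation that was feasible and admitted $k$ before remains feasible with at least as much objective value, so the new optimum must still admit $k$. This gives monotonicity in the sense of Definition~6 for both the single-minded lower-level WDP and the general-valuation upper-level WDP.

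For the critical value, I would first recall the standard VCG identity: for an exact welfare-maximizing allocation, the payment $q_k^{\mathrm{vcg}}=\sum_{j\neq k}v_j(\bar S_j^*)-\sum_{j\neq k}v_j(S_j^*)$ is precisely the infimum of bids at which $k$ remains a winner, i.e.\ the critical value of the pure VCG allocation. Now take the maximum with the bid-independent constant $q_k^{\mathrm{base}}$. Because the mechanism is understood to refuse allocations whose valuation is below $q_k^{\mathrm{base}}$ (any rational bidder would not submit such a bid in the first place), the effective monotone allocation rule treats bids below $q_k^{\mathrm{base}}$ as losing, so the effective critical value becomes $\max\{q_k^{\mathrm{base}},q_k^{\mathrm{vcg}}\}$, which is exactly what the pricing rule charges. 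By Lemma~3.3, each sub-auction is incentive compatible.

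The main obstacle is the general-valuation case at the upper level, where each MVNO submits a collection of XOR bids $\{(S_m,b_m(S_m))\}_{S_m\subseteq\Omega_m}$ rather than a single bid pair, so the notion of ``bidder $m$'s report'' is multi-dimensional. To handle this cleanly I would use the virtual single-minded reformulation (as already introduced for Algorithm~3): treat each bundle $S_m$ as a separate virtual bidder tied to a virtual commodity $m$ that enforces the XOR constraint. Monotonicity and the critical-value equality then apply to each virtual bidder, and because at most one virtual bidder per MVNO wins, a deviation by MVNO $m$ corresponds to altering one virtual bid at a time while the rest are held fixed. Standard VCG incentive compatibility over the virtual-bidder mechanism then translates back to truthful reporting of the entire XOR collection, which completes the general-valuation case.
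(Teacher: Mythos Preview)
The paper gives essentially no proof for this theorem: it simply notes, in the sentence immediately preceding the statement, that ``the VCG scheme has been shown to be incentive compatible~\cite{Groves1973} for allocation algorithms which solve the WDP exactly,'' and then asserts the theorem. In particular, the paper does not separately argue the base-access-price modification, nor does it route through Lemma~3.3 at all. Your proposal is therefore far more detailed than what the paper actually supplies, and for the single-minded lower level your monotonicity-plus-critical-value argument is a legitimate (and more careful) alternative.

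There is, however, a genuine gap in your treatment of the general-valuation (upper-level) case. Lemma~3.3 is a \emph{single-parameter} characterization: Definitions~6 and~7 speak of a single bid pair $\{b_k(S_k),S_k\}$ and a scalar threshold $\hat q_k$. When an MVNO submits an XOR collection $\{(S_m,b_m(S_m))\}_{S_m\subseteq\Omega_m}$, its report is multi-dimensional, and a deviation may perturb several bundle-bids simultaneously. Your virtual-bidder reformulation does not repair this, because all the virtual bidders corresponding to MVNO $m$ are controlled by the \emph{same} strategic agent; you cannot legitimately hold ``the rest fixed'' while varying one, since the MVNO chooses all of them jointly. Monotonicity in each coordinate plus a per-coordinate critical payment does not imply truthfulness for multi-parameter agents. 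The correct route for the general-valuation case is exactly the one the paper implicitly takes: since the DP algorithm solves the WDP exactly, the classical VCG dominant-strategy argument of~\cite{Groves1973} applies directly to multi-dimensional reports (the utility of bidder $m$ equals the realized social welfare minus a term independent of $m$'s report, hence maximized by truthful reporting). You should invoke that argument at the upper level rather than Lemma~3.3, and then append your base-price observation, which remains valid since $q_m^{\mathrm{base}}$ is independent of $m$'s report.
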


We will show that for single-minded users with explicit resource
request, the greedy algorithm with corresponding designed pricing
scheme can also achieve incentive compatibility in the lower-level
auction.

\begin{theorem}
The proposed auction mechanism with greedy algorithm is monotone for
the case of single-minded users with explicit resource request. The
corresponding pricing scheme charges the winning bidders their
critical values and the corresponding auction mechanism is
incentive compatible.
\end{theorem}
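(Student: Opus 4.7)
The plan is to verify, via Lemma 1, the two hypotheses that together imply incentive compatibility: (i) the greedy allocation is monotone under the preference order $\succeq$, and (ii) every winning bidder is charged exactly their critical bid value.

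\textbf{Monotonicity.} I would fix a winning bidder $k$ with pair $\{b_k(S_k),S_k\}$ and any refined pair $\{b_k',S_k'\}\succeq\{b_k(S_k),S_k\}$, so that the normalized ratio $b_k/\sqrt{|S_k|}$ weakly increases. After re-sorting, bidder $k$ moves weakly upward in the queue. Since no other bidder's ratio changes, every bidder $i$ whose ratio exceeds $k$'s new ratio has $k$ outside its own predecessor set in both sorts, so $i$'s greedy decision and consumption are identical in the two runs. Consequently $k$'s new predecessors form a subset of its old predecessors with unchanged processing, and the residual capacities facing $k$ weakly increase. Combined with $|S_k'|\leq|S_k|$, this shows $k$ still passes the constraints $c+c_k'\leq\hat{C}_m J$ and $p+p_k'\leq\hat{P}_m$ and remains a winner.

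\textbf{Critical value.} I would lower $b_k$ and track the first value at which $k$ becomes a loser. When the ratio $r_k=b_k/\sqrt{|S_k|}$ drops below some other bidder $i$'s ratio $r_i$, bidder $i$ becomes a new predecessor of $k$. Since $k$ is now below $i$ in the sort, $k$ is absent from $i$'s predecessor set, and hence $i$'s greedy decision matches its status in the run without $k$. Three cases arise. (a) If $i$ is a loser in both runs, it consumes nothing and $k$'s residual is untouched. (b) If $i$ is a winner in both runs, then in the original sort $i$ won at a position after $k$, yielding the feasibility certificate $R_0\geq|S_k|+|S_i|$ (and inductively $R_0\geq|S_k|+\sum_m|S_{i_m}|$ when several such bidders accumulate), so $k$ still fits after the local swap. (c) If $i\in\mathbb{B}_k$, then $i$ failed to fit in the original run, yielding the strict inequality $R_0<|S_k|+\sum_m|S_{i_m}|+|S_i|$, which forces the new residual at $k$'s turn strictly below $|S_k|$ and makes $k$ a loser. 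The critical point is therefore the first moment $r_k$ drops to $\max_{j\in\mathbb{B}_k}r_j$, giving a critical bid of $\max_{j\in\mathbb{B}_k}(b_j/\sqrt{|S_j|})\sqrt{|S_k|}=q_k^{\mathrm{greedy}}$.

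Finally, the posted base price acts as a reserve: a truthful bidder with value below $q_k^{\mathrm{base}}$ would obtain negative utility and decline to participate, so the effective critical value of the composite mechanism is $\max\{q_k^{\mathrm{base}},q_k^{\mathrm{greedy}}\}=q_k$, coinciding with the charged price. Applying Lemma 1 then delivers incentive compatibility. The main obstacle lies in case (b) of the critical-value analysis: showing that inserting winners-in-both before $k$ never exhausts $k$'s budget requires a careful inductive bookkeeping of the feasibility certificates from the original greedy run, leveraging the fact that $k$ followed by those same winners was already accommodated sequentially before the order was locally swapped.
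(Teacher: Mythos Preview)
Your approach is the paper's: both invoke Lemma~3.3 by checking monotonicity of the greedy ranking and identifying $q_k^{\mathrm{greedy}}$ as the critical value, though your case analysis for the latter is considerably more detailed than the paper's two-line assertion (the paper simply names the highest-ratio blocked bidder $j$ and declares $\frac{b_j}{\sqrt{|S_j|}}\sqrt{|S_k|}$ to be the minimum winning bid, without the (a)/(b)/(c) decomposition).

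One technical point to repair: your feasibility certificates in cases~(b) and~(c) are stated in the scalar $|S_k|=\omega_c c_k+\omega_p p_k$, but Algorithm~2 tests the subchannel and power constraints \emph{separately}, and a scalar inequality such as $R_0\ge|S_k|+\sum_m|S_{i_m}|$ does not by itself guarantee that both coordinates fit. The fix is immediate: run the same bookkeeping coordinate-wise, using that in the original run the residual after processing $k,i_1,\ldots,i_s$ is nonnegative in \emph{each} dimension, which gives $(R_0)_c\ge (S_k)_c+\sum_{\text{case (b)}}(S_{i_l})_c$ and the analogous power inequality; these are exactly what you need for $k$ to fit in the swapped order. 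The same coordinate-wise reading handles case~(c). A parallel caveat applies to your monotonicity step: $|S_k'|\le|S_k|$ alone does not force $c_k'\le c_k$ and $p_k'\le p_k$, but for single-minded truthfulness the relevant deviation is reporting a sub-bundle, which \emph{does} reduce each coordinate---so stating monotonicity in the component-wise order (rather than the paper's scalar $\succeq$) makes the argument go through cleanly.
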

\begin{proof}
The proof of monotonicity can be immediately obtained from the
allocation algorithm. Specifically, a bidder can increase her order
in the ranking by either increasing the bid value or reducing the
amount of required resources. For example, two users requesting the
same number of subchannels and power, the user with higher
achievable rate will be preferred.

Then we will find the critical value which is the minimum a bidder
has to bid to win the requested bundle of resource. Denote by $j$
the blocked bidder with highest normalized valuation who would win
if bidder $k$ is not participating in the auction. Accordingly, the
minimum bid the bidder $k$ needs to place is
$\frac{b_j}{\sqrt{|S_j|}}\sqrt{|S_k|}$, which is just the payment of
bidder $k$ in the pricing scheme.

With monotonicity and critical payment property, it is
straightforward that the incentive compatibility can be achieved
according to Lemma 3.3.
\end{proof}

However, note that the result on incentive compatibility for the greedy
algorithm and pricing is only valid for single-minded cases and
it will not hold for bidders with general valuations. Although there
exist schemes which preserve incentive compatibility for general
valuations, the worst-case performance is much inferior than that for the greedy
algorithm.

Regarding the incentive compatibility of the entire hierarchical
auction mechanism, we have the following theorem.
\begin{theorem}
The proposed hierarchical auction mechanism is incentive
compatible with any combination of incentive compatible sub-auctions
at each level.
\end{theorem}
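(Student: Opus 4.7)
The plan is to argue by backward induction through the two levels, using the incentive compatibility of each sub-auction as a black box and leveraging the way the MVNOs' derived valuations are constructed.

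First, I would start at the lower level. By assumption, each MVNO's sub-auction is incentive compatible, so for every user $k \in \mathcal{K}_m$, truth-telling $b_k = v_k$ is a dominant strategy regardless of the resource bundle $S_m$ that the MVNO ends up holding. A key observation is that this dominance holds \emph{for every realization} of $S_m$, because the lower-level mechanism is defined and incentive compatible given any $\hat{C}_m, \hat{P}_m, \hat{A}_m$ values. This is the crucial fact that lets us treat the two levels separately.

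Next, I would move to the upper level and show that each MVNO's optimal strategy is to bid its derived true valuation $v_m(S_m) = \sum_{k\in\mathcal{K}_m} q_k(S_m) - q_m^{\mathrm{res}}$ for every $S_m \subseteq \Omega_m$. The argument is: given lower-level truthful bidding (established above), for any candidate upper-level allocation $S_m$, the prices $q_k(S_m)$ paid by the MVNO's users are uniquely determined by the lower-level mechanism and are independent of how MVNO $m$ bids in the upper level (since the lower-level auction is run \emph{after} $S_m$ is fixed). Hence $v_m(S_m)$ is a well-defined, externally determined quantity from the MVNO's perspective, playing exactly the role of a classical ``true valuation'' in a single-level combinatorial auction. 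Since the upper-level sub-auction is incentive compatible by assumption, truth-telling $b_m(S_m) = v_m(S_m)$ for all $S_m$ is a dominant strategy. This gives, by Definition 2, $u_m(v_m, \tilde{b}_{-m}) \geq u_m(\tilde{b}_m, \tilde{b}_{-m})$ for every MVNO $m$ and every bid profile $\tilde{b}_{-m}$ of the other MVNOs.

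Finally, I would combine the two steps: users play a dominant truthful strategy in the lower-level auction irrespective of upper-level outcomes, and MVNOs play a dominant truthful strategy in the upper-level auction given users' truthful behavior. Thus the joint strategy profile (all agents truthful at their respective levels) is a dominant-strategy equilibrium for the hierarchical mechanism, which is the definition of incentive compatibility for the composite auction.

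The main obstacle, and the point I would emphasize carefully, is justifying that the MVNO's derived valuation $v_m(S_m)$ can be treated as an exogenous quantity when analyzing the upper-level sub-auction. This is non-trivial because the MVNO's revenue depends on the resale auction; what makes it go through is precisely that the lower-level sub-auction is incentive compatible for \emph{every} allocated bundle $S_m$, so the map $S_m \mapsto \sum_k q_k(S_m)$ is well-defined and depends neither on how MVNO $m$ bids upstream nor on strategic considerations of the users. Without lower-level incentive compatibility holding uniformly in $S_m$, this decoupling would fail and the hierarchical truthfulness claim would not follow from the per-level assumptions alone.
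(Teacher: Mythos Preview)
The paper does not actually prove this theorem: it states that ``the proof is similar to that in~\cite{Tang2012} and is omitted here.'' Your backward-induction argument---establish user-level dominance uniformly in $S_m$, then lift to MVNO-level dominance by treating the induced resale revenue as an exogenous valuation---is the natural route and is almost certainly what \cite{Tang2012} does.

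One refinement is worth making. In your MVNO step you condition on users bidding truthfully, which as written only shows that truthful bidding is a best response for each MVNO \emph{given} truthful users, i.e., an equilibrium rather than a dominant strategy. To match Definition~2 at the MVNO level you should instead argue that for \emph{any} profile of user bids (truthful or not), the lower-level mechanism runs deterministically on those bids, so the map $S_m \mapsto \sum_{k\in\mathcal{K}_m} q_k(S_m)$ is still a well-defined function of $S_m$ alone (with the user bids held fixed). The upper-level sub-auction is then, by assumption, incentive compatible with respect to \emph{that} induced valuation, whatever it is, so reporting it truthfully dominates for MVNO $m$ regardless of both the other MVNOs' bids and the users' bids. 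This is only a reordering of your argument---you already isolate the correct decoupling point---but it upgrades the conclusion from ``equilibrium'' to ``dominant-strategy incentive compatible,'' which is what the theorem asserts.
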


The proof is similar to that in~\cite{Tang2012} and is omitted here.

\section{Extension to a Multiple-Seller Multiple-Buyer Hierarchical Auction Model}

We now extend the single-seller multiple-buyer hierarchical auction
to a multiple-seller multiple-buyer model. In this case, the users
are not restricted to only one provider but can freely choose among
several MVNOs. Similarly, the MVNOs can choose from different InPs.

\subsection{WDP Formulations}

The bids are the same as that in the single-seller
model. The WDP formulation for the service broker as the auctioneer
in the upper-level auction is expressed as
\begin{eqnarray}
    \max_{x_{im}(S_m)} \sum_{i \in \mathcal{I}} \sum_{m\in \mathcal{M}} \sum_{S_m} b_m(S_m) x_{im}(S_m) \nonumber \\
    \mbox{s.t.}   \sum_{S_m} \sum_{m\in \mathcal{M}} x_{im}(S_m) C_m(S_m) \leq C_i -\sum_{m \in \mathcal{M}} C_{im}^{res}, ~\forall~i, \nonumber  \\
    \sum_{S_m} \sum_{m\in \mathcal{M}} x_{im}(S_m) A_m(S_m) \leq A_i -\sum_{m \in \mathcal{M}} A_{im}^{res}, ~\forall~i, \nonumber \\
    \sum_{S_m} \sum_{m\in \mathcal{M}} x_{im}(S_m) P_m(S_m) \leq P_i -\sum_{m \in \mathcal{M}} P_{im}^{res}, ~\forall~i, \nonumber \\
        \sum_{i \in \mathcal{I}} \sum_{S_m} x_m(S_m) \leq 1, \forall~m \in \mathcal{M}, \nonumber \\
         x_{im}(S_m) \in \{0,1\}, \forall~S_m , m, i, \nonumber
\end{eqnarray}
where $\mathcal{I}$ is the set of InPs and $x_{im}(S_m)=1$ indicates
the resource request for $S_m$ from MVNO $m$ to InP $i$ is accepted.
Note that the MVNO can only lease resources from one of the InPs.

Similarly, the WDP for the service broker in the lower-level auction
with explicit user resource request is expressed as: \beqn
    \max_{x_{mk}} \sum_{m\in \mathcal{M}} \sum_{k\in \mathcal{K}} b_k(S_k)
    x_{mk} \nonumber \\
    \mbox{s.t.}~~~
    \sum_{k\in \mathcal{K}} p_k(S_k) x_{mk} \leq \hat{P}_m,~\forall m, \nonumber \\
    \sum_{k\in \mathcal{K}} c_k(S_k) x_{mk} \leq \hat{C}_m J_m,~\forall m, \nonumber \\
    \sum_{m \in \mathcal{M}} x_{mk} \leq 1,~\forall k, \nonumber \\
    x_{mk} \in \{0,1\}, ~\forall m,k, \nonumber
\eeqn where $\mathcal{K} = \bigcup \mathcal{K}_m$.

The WDP for the service broker in the lower-level auction with
implicit user resource request is expressed as:
\begin{eqnarray}
    \max_{x_{mk}(\tilde{S}_{mk})} \sum_{m \in \mathcal{M}} \sum_{k \in \mathcal{K}} \sum_{\tilde{S}_{mk} \in \Omega_{mk}} b_k(\tilde{S}_{mk}) x_{mk}(\tilde{S}_{mk}) \nonumber \\
    \mbox{s.t.}~~~
    r_k(\tilde{S}_{mk}) = \tilde{R}_k, ~\forall \tilde{S}_{mk} \in \Omega_{mk},    \nonumber \\
    \sum_{k\in \mathcal{K}} \sum_{\tilde{S}_{mk}} p_k(\tilde{S}_{mk}) x_{mk}(\tilde{S}_{mk}) \leq P_m, ~\forall ~m, \nonumber \\
    c_k(\tilde{S}_{mk}) \leq \tilde{C}_m, ~\forall m, \nonumber \\
    \sum_{k\in \mathcal{K}} c_k(\tilde{S}_{mk}) x_{mk}(\tilde{S}_{mk}) \leq
    \hat{C}_m J_m, ~\forall m, \nonumber \\
    \sum_{m\in \mathcal{M}} \sum_{\tilde{S}_{mk}} x_{mk}(\tilde{S}_{mk}) \leq 1,~\forall k, \nonumber \\
    x_{mk}(\tilde{S}) \in \{0,1\}. \nonumber
\end{eqnarray}

\subsection{Allocation and Pricing Schemes}

The above WDPs are equivalent to multiple multidimensional knapsack
problems. For such problems, the dynamic programming-based methods
require huge memory~\cite{Kellerer2004}. In this case,
branch-and-bound method can be applied to find the exact solutions.
The key challenge for applying branch-and-bound approaches is to
find a tight upper bound of the problem with which then standard
branch-and-bound algorithms can be used (e.g.,
~\cite{Martello1990}). Therefore, in this part, we will focus on the
derivation of the upper bound of the WDPs.

We use the WDP in the lower-level auction with explicit user
resource request as an example. To obtain the upper bound, surrogate
relaxation is used\footnote{Note that continuous relaxation is not
used here since the upper bound obtained by continuous relaxation is
dominated by the bound obtained by surrogate relaxation.}. Given a
set of positive vector of multipliers $\pi$, the standard surrogate
relaxation of the original WDP problem can be expressed as \beqn
    \max_{x_{mk}} ~~S = \sum_{m\in \mathcal{M}} \sum_{k\in \mathcal{K}} b_k(S_k)
    x_{mk} \nonumber \\
    \mbox{s.t.}~~~
    \sum_{m\in \mathcal{M}} \pi_m \sum_{k\in \mathcal{K}} p_k(S_k)
    x_{mk} \leq \sum_{m\in \mathcal{M}} P_m, \nonumber \\
    \sum_{m\in \mathcal{M}} \pi_m \sum_{k\in \mathcal{K}} c_k(S_k)
    x_{mk} \leq \sum_{m\in \mathcal{M}} \hat{C}_m J_m,  \nonumber \\
    \sum_{m \in \mathcal{M}} x_{mk} \leq 1,~\forall k, \nonumber \\
    x_{mk} \in \{0,1\}, ~\forall m,k. \nonumber
\eeqn

Denote by $\tilde{o}$ the optimal value of the above relaxed
problem. Accordingly, $\tilde{o}$ is an upper bound of the original
problem for arbitrary nonnegative multiplier vector of $\pi$. To
achieve a tight upper bound, the optimal multiplier should be
chosen such that $S$ is minimized. That is  \beqn \pi^* =
\arg \min_{\pi} S(\pi). \eeqn

Compared with Lagrangian relaxation the optimal multipliers of
which can only be obtained numerically (e.g., through subgradient
methods), the optimal value of the multipliers for surrogate
relaxation for the formulated problem can be easily obtained as
shown in the following lemma~\cite{Martello1990}.

\begin{lemma}
For any instance of multiple knapsack problem, the optimal vector of
multipliers for $S(\pi)$ is $\pi_m^* = \zeta$ for all $m$, where
$\zeta$ is any positive constant.
\end{lemma}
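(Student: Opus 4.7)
The plan is to exploit two structural facts about the surrogate relaxation of a multiple knapsack problem (which the lower-level WDP is an instance of): (i) the profit $b_k$ of item $k$ is the same regardless of the knapsack $m$ to which it is assigned, and (ii) the surrogate drops the per-knapsack capacity constraints, so an item can be placed freely into any of the $|\mathcal{M}|$ knapsacks subject only to $\sum_m x_{mk}\le 1$. Combined with the fact that the standard surrogate constraint scales both sides by $\pi$ (taking the RHS to be $\sum_m \pi_m P_m$ and $\sum_m \pi_m\hat{C}_m J_m$, as in the cited reference), these facts will let me show that $S(\pi)$ is scale-invariant and is minimized precisely when all components of $\pi$ coincide.

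First I would record the scale invariance: replacing $\pi$ by $\zeta\pi$ for any $\zeta>0$ multiplies both sides of each surrogate constraint by $\zeta$, leaves the assignment constraints and integrality unchanged, and does not touch the objective. Hence $S(\zeta\pi)=S(\pi)$, and it suffices to compare $S(\pi)$ for $\pi$ with $\min_m\pi_m=1$ against $S(\mathbf{1})$.

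Next I would establish the key inequality $S(\pi)\ge S(\mathbf{1})$ by lifting any solution of the equal-multiplier surrogate into the general-$\pi$ surrogate. Let $y^{*}_k\in\{0,1\}$ be optimal for $S(\mathbf{1})$, so that $\sum_k p_k y^{*}_k\le \sum_m P_m$, $\sum_k c_k y^{*}_k\le \sum_m \hat{C}_m J_m$ and the objective equals $\sum_k b_k y^{*}_k=S(\mathbf{1})$. Let $m^{*}=\arg\min_m \pi_m$, set $x^{*}_{m^{*}k}=y^{*}_k$ and $x^{*}_{mk}=0$ for $m\neq m^{*}$. Then $\sum_m x^{*}_{mk}=y^{*}_k\le 1$, and the weighted power constraint becomes
\[
\sum_{m}\pi_m\sum_k p_k x^{*}_{mk}=\pi_{m^{*}}\sum_k p_k y^{*}_k\le \pi_{m^{*}}\sum_m P_m\le \sum_m \pi_m P_m,
\]
where the last inequality uses $\pi_m\ge \pi_{m^{*}}$ for every $m$; the subchannel constraint follows identically. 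Thus $x^{*}$ is surrogate-feasible for $\pi$ and attains objective $S(\mathbf{1})$, giving $S(\pi)\ge S(\mathbf{1})$. Combining with scale invariance yields $S(\pi)\ge S(\zeta\mathbf{1})$ for every $\pi>0$ and every $\zeta>0$, so $\pi^{*}_m=\zeta$ is a minimizer for any positive constant $\zeta$.

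I do not expect a major obstacle here; the only delicate point is that the ``lift to one knapsack'' construction uses crucially that the surrogate has no per-knapsack capacity limits and that item profits are knapsack-independent, so placing all selected items into the single knapsack with smallest multiplier is a valid feasible assignment. The cleanest way to present the argument is therefore to first note these two structural facts explicitly, then carry out the lifting for both the power and subchannel constraints in one line each.
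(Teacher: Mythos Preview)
Your argument is correct. The paper does not actually prove this lemma; it simply cites Martello and Toth and states the result. Your proof is essentially the standard one from that reference: scale invariance of $S(\pi)$ together with the observation that any optimal selection for the equal-multiplier surrogate can be placed entirely into the knapsack with smallest $\pi_m$, giving a feasible point for general $\pi$ with the same objective value and hence $S(\pi)\ge S(\mathbf{1})$. You are also right to take the weighted right-hand sides $\sum_m\pi_m P_m$ and $\sum_m\pi_m\hat C_mJ_m$ rather than the unweighted sums displayed in the paper; the lemma's assertion that every positive constant $\zeta$ is optimal presupposes the scale invariance $S(\zeta\pi)=S(\pi)$, which holds only for the weighted form, so the paper's display appears to be a slip.
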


With the optimal vector of multipliers, the relaxed problem becomes
\beqn
    \max \sum_{k\in \mathcal{K}} b_k \acute{x}_k  \nonumber \\
    \mbox{s.t.}~~~
    \sum_{k \in \mathcal{K}} p_k(S_k) \acute{x}_k \leq \sum_{m \in
    \mathcal{M}} P_m, \nonumber \\
    \sum_{k \in \mathcal{K}} c_k(S_k) \acute{x}_k \leq \sum_{m \in
    \mathcal{M}} \hat{C}_m J_m, \nonumber \\
    \acute{x}_k \in \{0,1\},~\forall k. \nonumber
\eeqn

Such relaxation can be viewed as considering only one knapsack with
larger capacity. A tight upper bound can be obtained by solving this
relaxed problem, for which the previously proposed dynamic
programming algorithm can be used considering the similarity in
problem structure. Also, based on the algorithm provided
in~\cite{Martello1990}, we can have a low-complexity polynomial-time
(of $O(n^2)$) heuristic algorithm for obtaining an approximation
solution. The details can be found in \textbf{Appendix B}.

For the pricing, since branch-and-bound approach can obtain the
exact solution of the WDP problems, the previous joint use of VCG
pricing and base access price can be applied here which preserves
the incentive compatibility, individual rationality, and allocation
efficiency in each level. Similarly, corresponding pricing schemes
for approximate solution can be applied.

\section{Performance Evaluation}

For numerical analysis, we consider a hexagonal system with $L=7$
cells. We consider that an InP owns $C=100$ subchannels and the BS
of the InP is equipped with $A=200$ antennas. Also, the total
transmit power of the BS is equally divided into $500$ power units.
There are two MVNOs each of which reserves $30$ subchannels, $150$
power units, and $50$ antennas (i.e., $C_m^{res} = 30$, $P_m^{res} =
150$, and $A_m^{res} = 50$), and the leftover resources are
available for auction in the upper level. Each MVNO has $50$
subscribed users. And in the simulation, each user requests $p$
units of power and $c$ subchannels, where $p$ and $c$ are integer
random variables uniformly distributed in the interval [0, 10], and
[0, 2], respectively. All the results are obtained by averaging over
1000 simulation runs.

For solving the single-seller multiple-buyer hierarchical
combinatorial auction problem, we adopt both dynamic programming
algorithm and greedy algorithm. For convenience, we use the terms
`DPA' and `GA' to represent the use of dynamic programming-based
algorithm and the use of greedy algorithm in both levels,
respectively.

For comparison purpose, we consider a fixed sharing scheme, where
each MVNO reserves half of the resources. This fixed sharing can
also be viewed as the case where there is no wireless virtualization
for resource sharing (e.g., each MVNO has its own infrastructure and
fixed resources). Accordingly, the hierarchical auction degenerates
to single lower-level combinatorial auctions hold by each MVNO.
Also, we consider a general sharing scheme as the benchmark, where
the MVNOs are not involved and the InP directly holds a single-level
combinatorial auction for resource allocation. We will use the terms
`FS' and `GS' to represent the results obtained by fixed sharing and
general sharing, respectively. We also solve the multiple-seller
multiple-buyer problem and compare the results with those obtained
for the single-seller model. For fair comparison, we consider the
same setting as that in the single-seller model while allowing the
users to access the services from one of the MVNOs. We will use the
term `MS1' and 'MS2' to represent the exact and approximate
solutions obtained for the multiple-seller problem, respectively.

For numerical analysis, we mainly consider three performance metrics
for resource allocation: average social welfare (i.e., the sum value
of all accepted bids), average resource utilization (i.e., the
proportion of resources utilized), and average user satisfaction
(i.e., the ratio of users whose resource requests are satisfied).

We first consider the average social welfare achieved by different
algorithms as shown in Fig.~\ref{fig:SocialEfficiency}. It is
straightforward that general sharing (GS) provides the largest
social welfare which is used as the benchmark. For dynamic
programming algorithm, we consider two group sizes (i.e., the
resources is auctioned in a grouped manner). There are tradeoffs in
selecting the group size in terms of performance and complexity. We
can see that the DP-Algorithm with group size 1 (i.e., DPA1)
outperforms that with group size 5 (i.e., DPA2). We can also observe
that the social welfare obtained by DPA1 is less than that obtained
by general sharing. This indicates that although the social
efficiency can be achieved for each level, there could exist a gap
between the social welfare obtained by the entire hierarchical
auction with the global optimal one obtained by general sharing.
This gap represents the tradeoff between global social efficiency
and the flexibility of intra-slice customization. Also, we can see
that the greedy algorithm provides good solutions
compared with the dynamic programming-based algorithm. We can also
observe that the welfare obtained for multiple-seller setting is
larger than that for single-seller problem. This is due to
flexibility introduced by allowing dynamic user association. We
can observe that all the proposed dynamic resource sharing schemes
(i.e., DPA, GA, and MS) outperform the fixed sharing scheme. This
 indicates the resource utilization gain by having wireless
virtualization.

\begin{figure}[th]
    \begin{center}
    \includegraphics[width=0.38\textwidth]{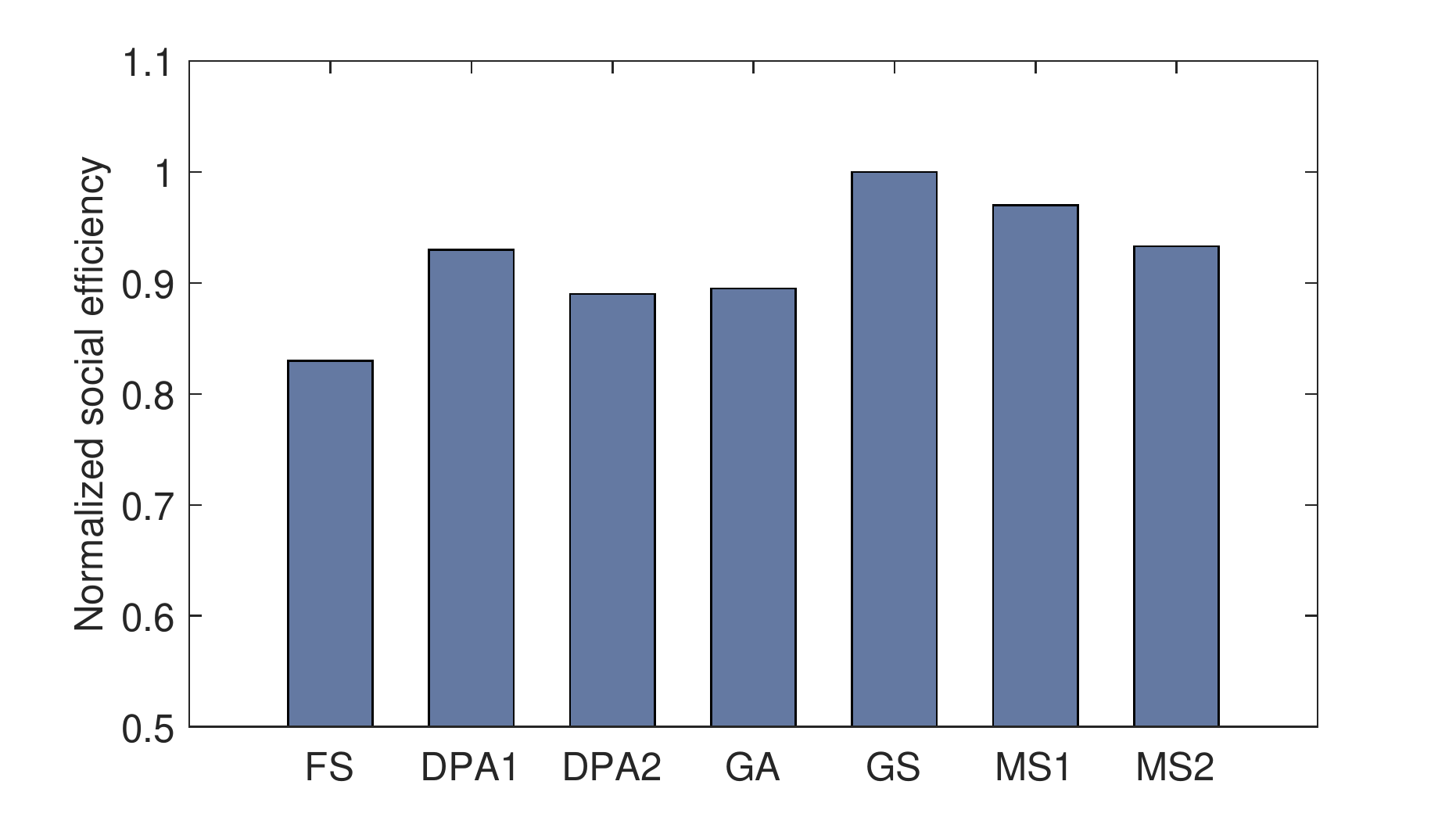}
    \end{center}
    \caption{Normalized average social efficiency achieved by different algorithms. ``FS" = Fixed sharing, ``DPA1" = DP-Algorithm with group size 1, ``DPA2" = DP-Algorithm with group size 5, ``GA" = Greedy algorithm, ``GS" = General sharing, ``MS1" = Multiple seller exact solution, ``MS2" = Multiple seller approximate solution.}
    \label{fig:SocialEfficiency}
\end{figure}

We then investigate the average resource utilization achieved by
different algorithms as shown in Fig.~\ref{fig:AveUtilization}. Here
we use the subchannel utilization as an example. The comparison
results are similar to those for the social welfare.

\begin{figure}[th]
    \begin{center}
    \includegraphics[width=0.38\textwidth]{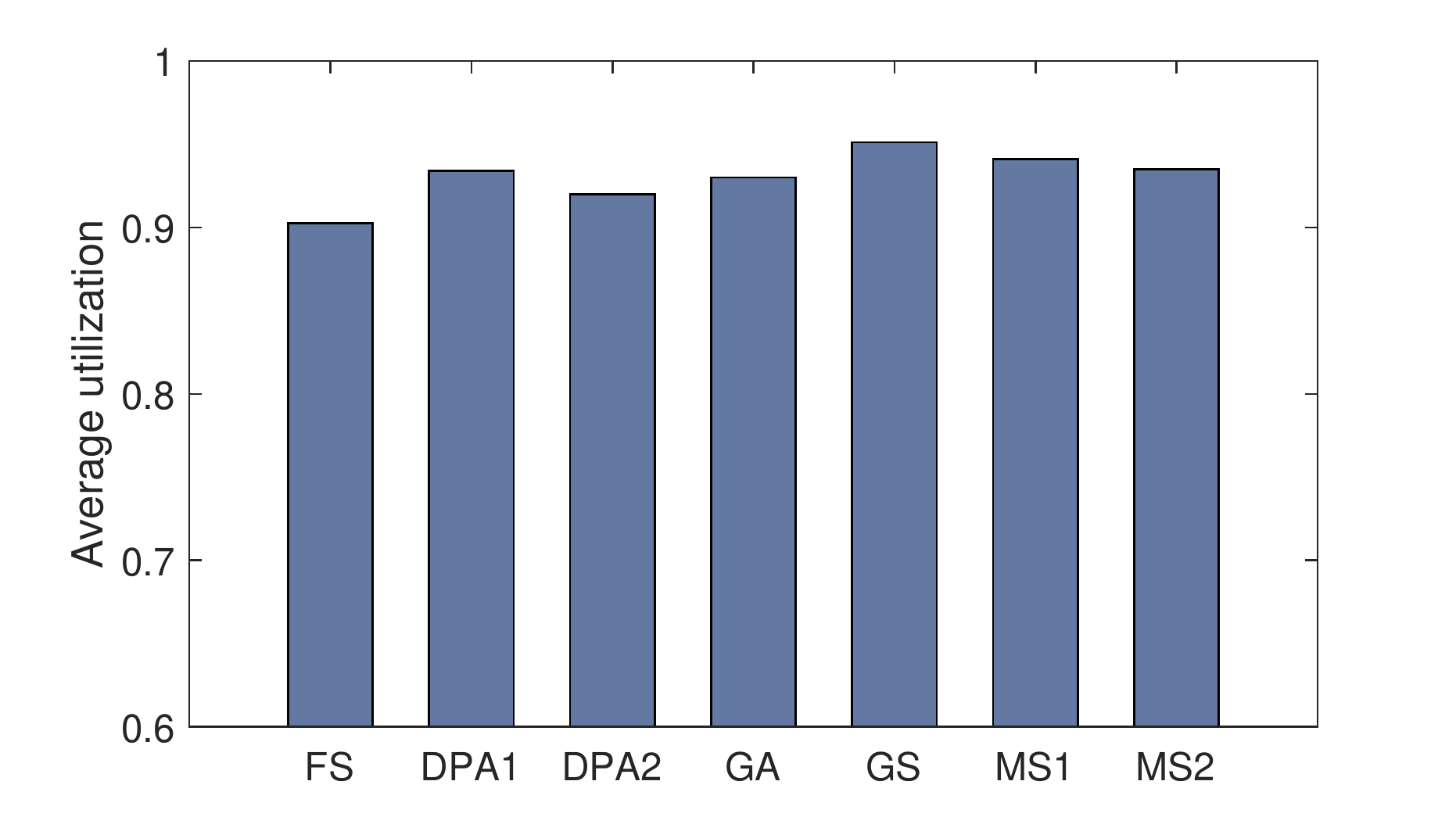}
    \end{center}
    \caption{Average subchannel utilization achieved by different algorithms.}
    \label{fig:AveUtilization}
\end{figure}

\begin{figure}[th]
    \begin{center}
    \includegraphics[width=0.33\textwidth]{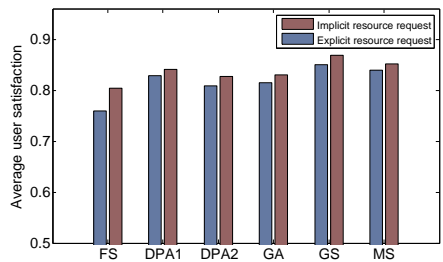}
    \end{center}
    \caption{Average user satisfaction achieved by different algorithms.}
    \label{fig:UserSatisfaction}
\end{figure}

We also compare the average user satisfaction obtained by different
algorithms and compare the results for explicit resource request
case and implicit resource request as shown in
Fig.~\ref{fig:UserSatisfaction}. We can see that the implicit
resource request model can achieve better user satisfaction. This is
due to the benefits from general valuation compared with
single-minded model. Specifically, for implicit resource request,
these exist multiple resource allocation strategies to achieve the
target rate, and if anyone of these strategies is accepted, the user
request is satisfied.

\begin{figure}[th]
    \begin{center}
    \includegraphics[width=0.35\textwidth]{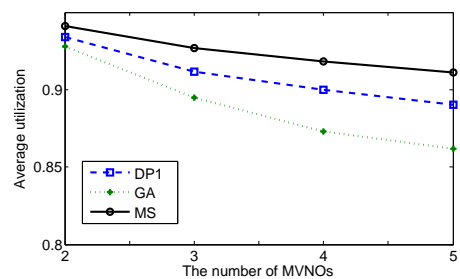}
    \end{center}
    \caption{Average resource utilization achieved by different algorithms with different number of MVNOs.}
    \label{fig:UtilizationVsMVNO}
\end{figure}

The impact of different number of MVNOs on the average resource
utilization is investigated in Fig.~\ref{fig:UtilizationVsMVNO}.
Note that the total number of resources of the InP and the total
number of users are fixed for different number of MVNOs. We can
observe that the utilization decreases with the increased number of
MVNOs. This is due to the fact that increasing the number MVNOs will
decrease the number of resources allocated as well as the number of
users associated to each MVNO, and the statistical multiplexing gain
will decrease.

\section{Conclusion}

We have proposed a hierarchical combinatorial auction mechanism to
jointly address the two-level resource allocation problem for
 virtualization of massive MIMO-based 5G cellular networks.
Specifically, we have considered both single-seller and
multiple-seller hierarchical models, and the corresponding winner
determination problems in two levels have been formulated and
studied. Algorithms have been proposed to solve the WDPs in a
computationally tractable manner and different pricing schemes have
been designed. The properties of the proposed hierarchical auction
mechanism have also been analyzed. The proposed scheme satisfies the
requirements of efficient resource allocation, strict inter-slice
isolation, and the ability of intra-slice isolation. Numerical
results have been presented which show the effectiveness of the
proposed scheme. For the future work, a valuation function taking
into account the fairness and desirability can be designed.

\section*{Appendix A: Summary of main differences of proposed scheme with
existing work}

The main differences between existing work and the proposed scheme
is shown in Table 1.

\begin{table*}
\caption{A summary of main differences of proposed scheme with
existing work} \label{table:summary}
\begin{tabular}{|p{1cm}|p{5.0cm}|p{5.0cm}|p{1.5cm}|p{2.7cm}|}
  \hline
     Schemes & Differences of the model & Pricing scheme & Multi-seller case& Dimensions of auctioned items \\
  \hline
  [13] & There is a social planner controlling the resource allocation in all tiers & First price or VCG-Pricing & No & Single dimension \\
  \hline
  [14] & A middleman can submit bids for multiple items but win at most one & Designed pricing with reserve price in the upper-tier, the same price charged to SUs in the lower-tier & No & Single dimension \\
  \hline
  [15] & Multi-layered spectrum trading through an auction in the upper level and a price demand method in the lower
  level & VCG pricing and two variants of uniform pricing & No &
  Single dimension \\
  \hline
  Proposed & Multi-item combinatorial auction at both tiers & VCG pricing or VCG-like pricing with reserved price & Yes & Multiple dimensions \\
  \hline
\end{tabular}
\end{table*}

\section*{Appendix B: Heuristic Algorithm for Solving the Multi-seller Multi-buyer Problem}

We provide a low-complexity (of $O(n^2)$) heuristic algorithm by extending the
algorithm in [35] for single dimensional items to multiple
dimensions. Specifically, in the initialization phase, we re-index
all users and all MVNOs. Then an initial feasible solution can be
obtained by applying the greedy algorithm. After obtaining the
initial solution, the algorithm improves the solution through local
exchanges. The details of the algorithm are described as follows.

\begin{algorithm}
\caption{Initialization}
\begin{algorithmic}
\STATE 1. Set $|S_k| = \omega_c c_k + \omega_p p_k, ~\forall k$;
\STATE 2. Re-index all users such that
    \beqn
        \frac{b_1(S_1)}{\sqrt{|S_1|}} \geq
        \frac{b_2(S_2)}{\sqrt{|S_2|}} \geq \cdots \geq
        \frac{b_{K}(S_{K})}{\sqrt{|S_{K}|}};
    \eeqn
\STATE 3. Set $\bar{C}_m = \omega_c \hat{C}_m + \omega_p \hat{P}_k,
~\forall m$; \STATE 4. Re-index all MVNOs such that
    \beqn
        \bar{C}_1 \leq \bar{C}_2 \leq \cdots \leq \bar{C}_M.
    \eeqn
\end{algorithmic}
\end{algorithm}

\begin{algorithm}
\caption{Greedy Algorithm}
\begin{algorithmic}
\STATE Denote by $x_k$ the assignment index. Specifically, $x_k=0$
if user $k$ is currently unassigned, and $x_k$ equals the index of
the MVNO it is assigned to, otherwise. \STATE Initialization: set
$x_k = 0, \forall k$; $\tilde{C}_m = \hat{C}_m, \forall m$ and
$\tilde{P}_m = \hat{P}_m, \forall m;$ \FOR{$k := 1$ to $K$}
    \IF {$x_k=0$, $p_k \leq \tilde{P}_m$, and $c_k \leq \tilde{C}_m$}
        \STATE $x_k := m;$
        \STATE $\tilde{C_m} = \tilde{C_m} - c_k;$ ~~ $\tilde{P_m} = \tilde{P_m} - p_k;$
    \ENDIF
\ENDFOR
\end{algorithmic}
\end{algorithm}

\begin{algorithm}
\caption{Initial Solution}
\begin{algorithmic}
\STATE $z:=0;$ \FOR{$k:=1$ to $K$}
    \STATE $x_k :=0;$
\ENDFOR \FOR{$m:=1$ to $M$}
    %\STATE $\tilde{C}_m:=\hat{C}_m;$ ~~ $\tilde{P}_m:=\hat{P}_m$;
    \STATE Call Greedy Algorithm
\ENDFOR
\end{algorithmic}
\end{algorithm}

\begin{algorithm}
\caption{Rearrangement}
\begin{algorithmic}
\STATE $z:=0;$~$m:=1$;
%\FOR{$m:=1$ to $M$}
%    \STATE $\tilde{C}_m:=\hat{C}_m;$ ~~ $\tilde{P}_m:=\hat{P}_m;$
%\ENDFOR
\FOR{$k:=K$ to 1}
    \IF{$x_k >0$}
        \STATE let $l$ be the first index in $\{m,\ldots,M\}\bigcup \{1,\ldots,m-1\}$ such that $c_k\leq \tilde{C}_l$ and $p_k\leq \tilde{P}_l$;
        \IF{no such $l$}
            \STATE $x_k:=0$
        \ELSE
            \STATE $x_k:= l;$ ~$\tilde{C}_l = \tilde{C}_l - c_k;$ ~$\tilde{P}_l = \tilde{P}_l - c_k;$ ~ $z:=z+b_k;$
            \IF{$l<m$}
                \STATE $m:=l+1;$
            \ELSE
                \STATE $m:=1;$
            \ENDIF
        \ENDIF
    \ENDIF
\ENDFOR \FOR{$m:=1$ to $M$}
    \STATE Call Greedy Algorithm
\ENDFOR
\end{algorithmic}
\end{algorithm}

\begin{algorithm}
\caption{First improvement}
\begin{algorithmic}
\FOR{$k:=1$ to $K$}
    \IF{$x_k>0$}
        \FOR{$j:=k+1$ to $K$}
            \IF {$0<x_j \neq x_k$}
                \STATE $h:= \arg \max\{c_j,c_k\};$ ~ $l:= \arg \min\{c_j,c_k\};$
                \STATE $d_c:= c_h - c_l;$ ~ $d_p:= p_h - p_l;$
                \IF{$d_c\leq \tilde{C}_{x_l}$ and $\tilde{C}_{x_h} \geq \min\{w_u:x_u=0\}$}
                    \STATE $t:=\arg \max\{b_u:x_u=0 and w_u\leq \tilde{C}_{x_h}+d\};$
                    \STATE $\tilde{C}_{x_h} := \tilde{C}_{x_h}+d_c-c_t;$ ~ $\tilde{C}_{x_l} := \tilde{C}_{x_t}-d_c;$
                    \STATE $\tilde{P}_{x_h} := \tilde{P}_{x_h}+d_p-p_t;$ ~ $\tilde{P}_{x_l} := \tilde{P}_{x_t}-d_p;$
                    \STATE $x_t:=x_h;$ ~$x_h:=x_l;$ ~$x_l:=x_t;$ ~$z:=z+b_t;$
                \ENDIF
            \ENDIF
        \ENDFOR
    \ENDIF
\ENDFOR
\end{algorithmic}
\end{algorithm}

\begin{algorithm}
\caption{Second improvement}
\begin{algorithmic}
\FOR{$k:=K$ to 1}
    \IF{$x_k>0$}
        \STATE $\tilde{C}:=\tilde{C}_{x_k} + c_k;$~$\tilde{P}:=\tilde{P}_{x_k} + p_k;$~$Y:= \emptyset;$
        \FOR{$j:=1$ to $K$}
            \IF {$x_k=0$, $c_k \leq \tilde{C}$, and $p_k \leq \tilde{P}$}
                \STATE $Y:=Y \bigcup {j};$ ~ $\tilde{C}:= \tilde{C}-c_k;$ ~$\tilde{P}:= \tilde{P}-p_k;$
            \ENDIF
            \IF {$\sum_{j\in Y}b_k > b_k$}
                \FOR{each $k\in Y$}
                    \STATE $x_j:=x_k;$ ~ $\tilde{C}_{x_k}:=\tilde{C};$ ~ $\tilde{P}_{x_k}:=\tilde{P};$ ~$x_k:=0;$ ~$Z:=Z+\sum_{j\in Y}b_j - b_k;$
                \ENDFOR
            \ENDIF
        \ENDFOR
    \ENDIF
\ENDFOR
\end{algorithmic}
\end{algorithm}

\begin{IEEEbiography}
[{\includegraphics[width=1in,height=1.25in,clip,keepaspectratio]{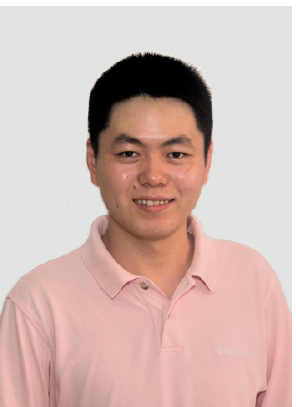}}]
{Kun Zhu}  received his Ph.D. degree in 2012 from School of Computer
Engineering, Nanyang Technological University, Singapore. He is
currently with College of Computer Science and Technology at Nanjing
University of Aeronautics and Astronautics. He was a Post-Doctoral
Research Fellow with the Wireless Communications, Networks, and
Services Research Group in the Department of Electrical and Computer
Engineering at the University of Manitoba, Canada. His research
interests are in the areas of resource management in multi-tier
heterogeneous wireless networks.

\end{IEEEbiography}

\begin{IEEEbiography}
[{\includegraphics[width=1in,height=1.25in,clip,keepaspectratio]{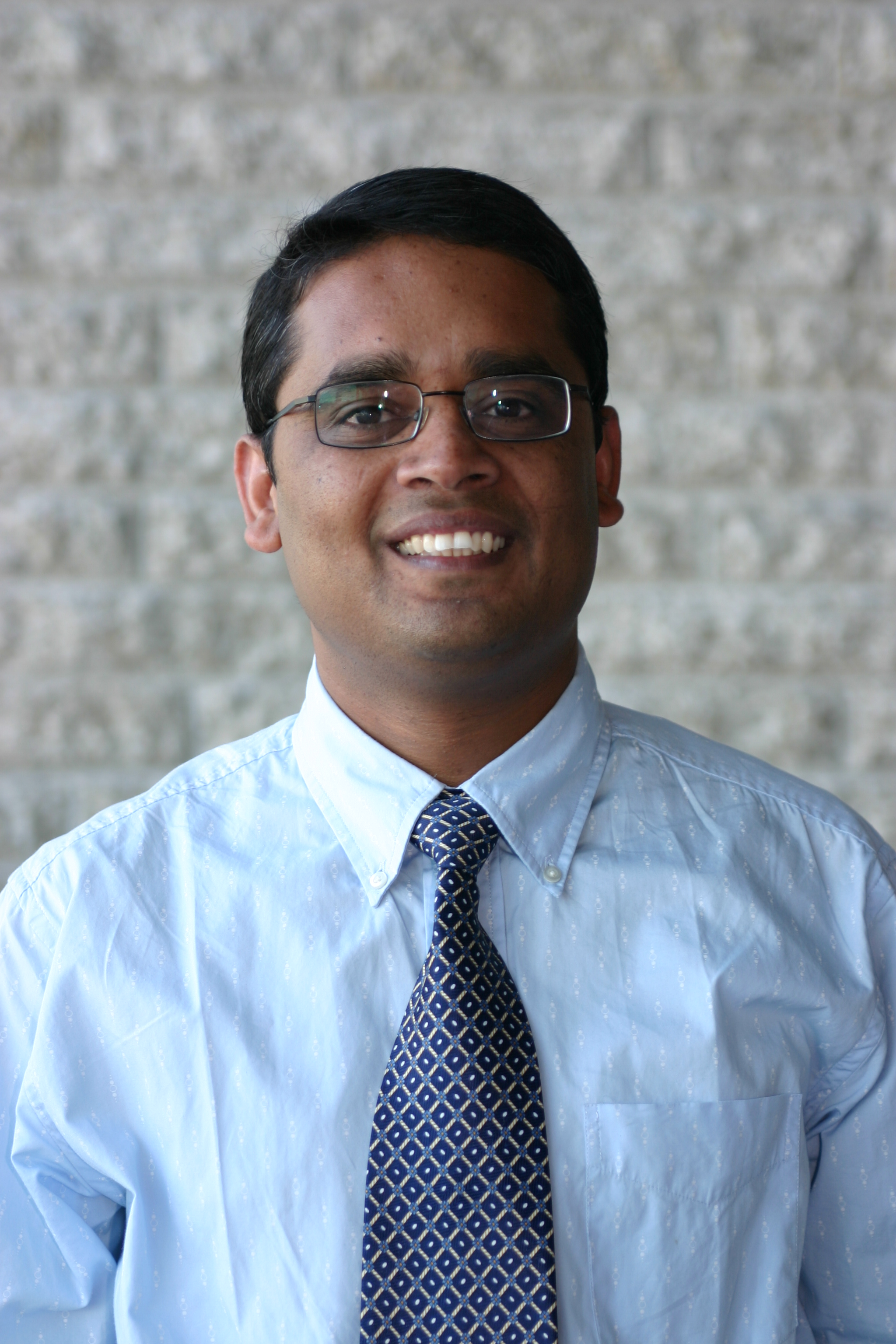}}]
{Ekram Hossain} (F'15)
is a Professor in the Department of Electrical and Computer
Engineering at University of Manitoba, Winnipeg,
Canada. He received his Ph.D. in Electrical
Engineering from University of Victoria,
Canada, in 2001. Dr. Hossain's current research
interests include design, analysis, and optimization
of wireless/mobile communications networks, cognitive
radio systems, and network economics. 
He has authored/edited several books in these areas
(http://home.cc.umanitoba.ca/$\sim$hossaina). 
He was elevated to an IEEE Fellow ``for contributions to spectrum management and resource allocation in cognitive and cellular radio networks". 
%Currently he serves as the Editor-in-Chief for the \textit{IEEE Communications Surveys and
%Tutorials} and an Editor for \textit{IEEE Wireless Communications}. Also, he is a
%member of the IEEE Press Editorial Board. Previously, he served as the Area
%Editor for the \textit{IEEE Transactions on Wireless Communications} in the area of
%``Resource Management and Multiple Access'' from 2009-2011, an Editor for
%the \textit{IEEE Transactions on Mobile Computing} from 2007-2012, and an Editor
%for the \textit{IEEE Journal on Selected Areas in Communications - Cognitive Radio
%Series} from 2011-2014. 
Dr. Hossain has won several research awards including
the IEEE Communications Society Transmission, Access, and Optical Systems (TAOS) Technical Committee's Best Paper Award in IEEE Globecom 2015, University of Manitoba Merit Award in 2010 and 2014 (for Research and
Scholarly Activities), the 2011 IEEE Communications Society Fred Ellersick
Prize Paper Award, and the IEEE Wireless Communications and Networking
Conference 2012 (WCNC'12) Best Paper Award. He is a Distinguished Lecturer of the
IEEE Communications Society (2012-2015). He is a registered Professional
Engineer in the province of Manitoba, Canada.
\end{IEEEbiography}

\end{document}